\NewDocumentCommand \shortlongversion {mm}
	{\ifthenelse{\boolean{longversion}}{#2}{#1}}
\NewDocumentEnvironment{ienumerate}{}
	{\begin{enumerate*}[label=
		{\color{lipicsGray}\sffamily\bfseries\upshape\mathversion{bold}%
		(\roman*)} ]}
	{\end{enumerate*}}
\NewDocumentEnvironment{thmenumerate}{sm}
	{	\IfBooleanT{#1}{\mbox{}\vspace*{-\topsep}}%
		\begin{enumerate}[ref=\arabic{#2}(\arabic*)] }
	{	\end{enumerate} }
\NewDocumentCommand \proofrule {m} {\ensuremath{(\mathsf{#1})}}
\patchcmd{\section}{\bfseries}{\bfseries\mathversion{bold}}{}{}
\patchcmd{\subsection}{\bfseries}{\bfseries\mathversion{bold}}{}{}
\patchcmd{\paragraph}{\bfseries}{\bfseries\mathversion{bold}}{}{}
\patchcmd{\@evenhead}{\bfseries}{\bfseries\mathversion{bold}}{}{}
\patchcmd{\@oddhead}{\bfseries}{\bfseries\mathversion{bold}}{}{}
\NewDocumentCommand{\myparagraph}{m}
	{\vspace{1.5ex plus .2ex}\par\noindent{\sffamily\bfseries 
	#1.}\hspace{.8em plus .2em minus .2em}}
\newtheorem{fact}[theorem]{Fact}
\theoremstyle{definition}
\newtheorem{examples}[theorem]{Examples}
\patchcmd{\descriptionlabel}{\sffamily}{\sffamily\upshape}{}{}
\NewDocumentCommand \commentaire {om}
	{{\color{red}\sffamily
	$\blacksquare$
	\IfValueT{#1}{\textbf{#1:}} 
	#2}\par}
\NewDocumentCommand \defemph {m} {\emph{#1}}
\NewDocumentCommand \ie {} {\emph{i.e.}\@\xspace}
\NewDocumentCommand \eg {} {\emph{e.g.}\@\xspace}
\NewDocumentCommand \wrt {} {wrt.\@\xspace}
\NewDocumentCommand \ifff {} {iff\xspace}
\NewDocumentCommand \co {} {\color{orange}}
\NewDocumentCommand \itrs {} {\textsc{itrs}\xspace}
	\let\oldEpsilon\Epsilon
	\renewcommand\Epsilon{\mathrm\oldEpsilon}
	\let\oldZeta\Zeta
	\renewcommand\Zeta{\mathrm\oldZeta}
	\let\oldEta\Eta
	\renewcommand\Eta{\mathrm\oldEta}
\NewDocumentCommand \longrightharpoondown {}
	{ \relbar\joinrel\rightharpoondown }
\NewDocumentCommand \longrightsquigarrow {}
	{\mathrel{\mathpalette\@squig\relax}%
		\joinrel\mathrel{\mathpalette\@squig\relax}%
		\joinrel\mathrel{\mathpalette\@squig\relax}%
		\joinrel\rightsquigarrow}
	\newcommand*{\@squig}[2]{%
	   \clipbox{0pt 0pt {.46\width} {-.2\height}}{$\m@th#1\rightsquigarrow$}%
	}
\newlength{\widerelwidth}
\NewDocumentCommand \wide	{o}
	{	\hspace{.6em plus .3em}
		\IfValueT{#1}{\mathord{#1}\hspace{.6em plus .3em}} }
\NewDocumentCommand \tunderline		{m}
    {\tikz[baseline=(foo.base)]{
        \node[inner xsep=0pt,inner ysep=2pt,outer sep=0pt] (foo) {#1};
        \draw (foo.south west) -- (foo.south east);
    }}
\NewDocumentCommand \tunderdash		{m}
    {\tikz[baseline=(foo.base)]{
        \node[inner xsep=0pt,inner ysep=2pt,outer sep=0pt] (foo) {#1};
        \draw[densely dashed] (foo.south west) -- (foo.south east);
    }}
\NewDocumentCommand \Bool 			{}	
	{ \{0,1\} }
\NewDocumentCommand \Nat 			{}	
	{ \mathbf{N} }
\NewDocumentCommand	\set 			{mo}
	{ \left\{ #1 \IfValueT{#2}{\ \middle|\ #2 } \right\} }
\NewDocumentCommand \partialto		{}		{ \rightharpoonup }
\NewDocumentCommand \property		{m}		{ \mathfrak{#1} }
\NewDocumentCommand \drule 			{m} 	{ \focons{#1} }
\NewDocumentCommand \later			{}		{ \mathrel{\triangleright} }
\NewDocumentCommand \Var 			{}		{ \mathcal{V} }
\NewDocumentCommand \var 			{m} 	{ \mathnormal{#1} }
\NewDocumentCommand \hole 			{} 		{ \ast }
\NewDocumentCommand \arity 			{}		{ \mathrm{ar} }
\NewDocumentCommand \trunc			{omo}
	{ \left\lfloor #2 
	\right\rfloor_{\IfValueTF{#1}{#1}{d}}^{\IfValueT{#3}{#3}} }
\NewDocumentCommand \dist			{o}		
	{ \mathbf{d}^{\IfValueT{#1}{#1}} }
\NewDocumentCommand \focons 		{m} 	{ \mathsf{#1} }
\NewDocumentCommand \foterms		{o}
	{ \mathrm{T}_{\IfValueTF{#1}{#1}{\Sigma}} }
\NewDocumentCommand \foiterms		{o}
	{ \mathrm{T}_{\IfValueTF{#1}{#1}{\Sigma}}^\infty }
\NewDocumentCommand \lterms			{}
	{ \Lambda }
\NewDocumentCommand \liiiterms 		{mmm}
	{ \lterms^{#1#2#3} }
\NewDocumentCommand \subst 			{om}	{[ #2 / \IfValueTF{#1}{#1}{x}] }
\NewDocumentCommand \dtrees		{o}
	{ \mathrm{DT}_{\IfValueTF{#1}{#1}{\mathcal D}}^\infty }
\NewDocumentCommand \coind 			{o}	
	{ \mathrm{coind}_{\IfValueTF{#1}{#1}{\drule r}} }
\NewDocumentCommand \red 			{}		
	{ \longrightarrow }
\NewDocumentCommand \reds 			{}		
	{ \red^* }
\colorlet{depth}{magenta!50}
\NewDocumentCommand \redd			{sm}
	{ \red_{\color{depth} \IfBooleanT{#1}{\geq} #2} }
\NewDocumentCommand \redi			{o}		
	{ \mathrel{\stackunder{\red}{\scriptstyle\IfValueT{#1}{#1}}}^\infty }
\NewDocumentCommand \redig 			{o}
	{ \mathrel{
	\stackunder{\longrightharpoondown}{\scriptstyle\IfValueT{#1}{#1}}
	}^\infty }
\NewDocumentCommand \redis			{o}		
	{ \mathrel{\stackunder{\red}{\scriptstyle\IfValueT{#1}{#1}}}^{\infty*} }
\NewDocumentCommand \redigs			{o}		
	{ \mathrel{
	\stackunder{\longrightharpoondown}{\scriptstyle\IfValueT{#1}{#1}}
	}^{\infty*} }
\NewDocumentCommand \redsequence 	{mm}
	{ \mathrel{\stackunder{\longrightsquigarrow}{\scriptstyle #1,#2}} }
\NewDocumentCommand \redo			{}		
	{ \red^\omega }
\NewDocumentCommand \redog 			{}
	{ \longrightharpoondown^\omega }
\NewDocumentCommand \redord			{o}
	{ \mathrel{
	\stackunder{\Longrightarrow}{\scriptstyle\IfValueT{#1}{#1}}
	}^\infty }
\NewDocumentCommand \muMALL			{}		{ \mu\mathsf{MALL} }
\NewDocumentCommand \muMALLi		{}		{ \muMALL^\infty }
\NewDocumentCommand \muLL			{}		{ \mu\mathsf{LL} }
\NewDocumentCommand \muLLi			{}		{ \muLL^\infty }
\NewDocumentCommand \llneg			{m}		{ {#1}^\bot }
\NewDocumentCommand \normalampersand {}		{{\upshape\mdseries\&}}
\NewDocumentCommand \llpar			{}
	{ \mathop{ \rotatebox[origin=c]{180}{\normalampersand} } }
\NewDocumentCommand \llwith			{}		
	{ \mathop{\text{\normalampersand}} }
\NewDocumentCommand \lltens			{}		{ \mathop{\otimes} }
\NewDocumentCommand \llplus			{}		{ \mathop{\oplus} }
\NewDocumentCommand \cutrel 		{}		{ \sqcup }
\NewDocumentCommand	\indi			{}		{ \mathbf{i} }
\NewDocumentCommand	\indj			{}		{ \mathbf{j} }
\NewDocumentCommand \seqcard		{m}		{ \left|#1\right| }
\NewDocumentCommand \moresequents	{}		{ \vec{\Zeta} }
\NewDocumentCommand \conclusion		{}		{ \Eta }
\NewDocumentCommand \cutelimrule {mm}
	{\proofrule{#1}/\proofrule{#2}}
\NewDocumentCommand \cutelimstep {m}
	{\begin{center}%
	\scalebox{.9}{\clap{%
		\setlength{\fboxsep}{10pt}%
		\colorbox{lipicsYellow!50}{\ensuremath{#1}}%
	}}%
	\end{center}}
\NewDocumentCommand \widered {} 
	{ \quad\raisebox{1.5ex}{$\red$}\quad }
\NewDocumentCommand \prempred {m} { \mathrm{pred}(#1) }
\title{Compression for Coinductive Rewriting
and the Cut-Elimination of Non-Wellfounded Proofs}
\titlerunning{Compression for Coinductive Rewriting}
\author{Rémy Cerda}
	{Université Paris Cité, CNRS, IRIF, F-75013, Paris, France
	\and DISI, Università di Bologna, Italy
	\and \url{https://www.i2m.univ-amu.fr/perso/remy.cerda/}}
	{Remy.Cerda@math.cnrs.fr}
	{https://orcid.org/0000-0003-0731-6211}
	{}
\author{Alexis Saurin}
	{Université Paris Cité, CNRS, IRIF, F-75013, Paris, France
	\and INRIA Picube, Paris, France
	\and \url{https://www.irif.fr/users/saurin/index}}
	{Alexis.Saurin@irif.fr}
	{https://orcid.org/0009-0002-1304-5518}
	{}
\authorrunning{R. Cerda and A. Saurin}
\keywords{%
coinduction,
non-wellfounded proof theory,
rewriting,
compression lemma,
lambda-calculus,
cut-elimination
}
\begin{document}

\maketitle

%\begin{abstract}
%Infinitary rewriting, \ie rewriting featuring possibly infinite terms
%and sequences of reduction, is a convenient framework for describing
%the dynamics of non-terminating but productive rewriting systems.
%In its original definition based on metric convergence
%of ordinal-indexed sequences of rewriting steps, 
%a highly desirable property of an infinitary rewriting system is
%\emph{compression}, \ie the fact that rewriting sequences of arbitrary
%ordinal length can always be \enquote{compressed} to equivalent
%sequences of length at most $\omega$.
%
%Since then, the standard examples of infinitary rewriting systems
%have been given another equivalent presentation based on coinduction.
%In this work, we extend this presentation to the rewriting of
%arbitrary non-wellfounded derivations and we investigate compression
%in this setting. We design a generic proof of compression,
%relying on a characterisation factorising most of the proof
%and identifying the key property
%a compressible infinitary rewriting system should enjoy.
%
%As running examples, we discuss first-order rewriting
%and infinitary λ-calculi. For the latter, compression can in particular
%be seen as a justification of its coinductive presentation in the
%literature.
%As a more advanced example, we also address compression
%of cut-elimination sequences in the non-wellfounded proof system
%$\muMALLi$ for multiplicative-additive linear logics with fixed points,
%which is a key lemma of several cut-elimination results
%for similar proof systems.
%\end{abstract}

\begin{abstract}
We introduce a generic presentation of
\enquote{syntactic objects built by mixed induction and coinduction} 
encompassing all standard kinds of infinitary terms,
as well as derivation trees in non-wellfounded proof systems.
We then define a coinductive notion of infinitary rewriting of such objects,
which is equivalent to the original presentation of infinitary rewriting 
relying on metric convergence 
and ordinal-indexed sequences of rewriting steps. 
This provides a unified coinductive presentation of 
\eg first-order infinitary rewriting, infinitary λ-calculi, 
and cut-elimination in non-wellfounded proofs.

We then formulate and study the coinductive counterpart of compression, 
\ie the property of an infinitary rewriting system such that
all rewriting sequences of any ordinal length can be \enquote{compressed}
to equivalent sequences of length at most $ω$
(which ensures that they can be finitely approximated).
We characterise compression
in our generic setting for coinductive rewriting, 
\enquote{factorising} the part of the proof that can be performed 
at this level of generality. 
Our proof is fully coinductive, avoiding any detour via rewriting sequences.

Finally we focus on the non-wellfounded proof system $\muMALLi$
for multiplicative-additive linear logic with fixed points,
and we put our results to work in order to prove that 
compression holds for cut-elimination in this setting, 
which is a key lemma of several extensions of cut-elimination
to similar systems.
\end{abstract}

%\clearpage
% !TeX spellcheck = en_GB

\section{Introduction}
\label{sec:introduction}

Infinite objects and processes pervade the study of computing%
%, in various ways%
.
It is indeed well-known that
considering non-terminating computations
potentially resulting in undefined results 
is fundamental in the development of any
universal model of computation. %such as Turing machines,
%recursive functions or the λ-calculus.
Non-terminating computations are also crucial in the study
of productive processes, for instance for reactive programs:
in such a setting, programs may run forever but one requires
either a good behaviour with respect to the environment (\emph{fairness})
or that arbitrary approximations of a result are computed
in finite time (\emph{productivity}).

In the presence of non-terminating behaviours,
it is natural to consider ideal objects representing
what is computed at the limit of the process
(typically built \emph{via} a completion 
of the set of finite processes
with respect to some algebraic or topological structure).
Standard examples include infinite streams 
obtained by completing finite words,
or Böhm trees resulting from the ideal completion
of λ-terms extended with an undefined value \cite{Barendregt.77},
that can be seen as a \enquote{syntactic description of the semantics}
of a program.
Once such limit objects enter the picture, 
it makes sense to give them a first-class status
and to allow computation to be performed directly
on infinite objects;
in particular in a functional setting one wants
to ensure compositionality,
\ie that the (possibly infinite) result of a computation should in turn be 
applicable to some arguments, giving rise to a new
(possibly infinite) computation.

This motivated the introduction and the study of \emph{infinitary rewriting}
in the 1990s:
in an infinitary rewriting system
the terms may be infinite  (they result from the metric completion
of usual, finite terms)
and the sequences of rewriting steps may be indexed by arbitrary ordinals.
This idea was successfully applied to
first-order rewriting
\cite{Dershowitz.Kap.Pla.91,Kennaway.Klo.Sle.Vri.95},
to the λ-calculus \cite{Kennaway.Klo.Sle.Vri.97},
to higher-order rewriting \cite{Ketema.Sim.11}, etc.
Therein, the limit objects of interest 
are no longer floating somewhere between syntax and semantics:
they become plain syntactic objects
(typically the normal forms of the infinitary rewriting).

%Still, the introduction of infinite objects often comes with difficulties 
%and surprising behaviours:
%
%\begin{itemize}
%    \item loss of logical soundness for infinite proofs, requiring to 
%impose an external correctness condition (often called progress condition) 
%which may be expressed as a thread-condition or guarded recursion;
%    \item loss of confluence for infinitary $\lambda$-calculi, etc.
%\end{itemize}
%
%Such issues make it difficult to manipulate and reason about these objects 
%which are still in need, in many cases, for strong foundational tools.

\myparagraph{Infinitary rewriting for non-wellfounded proof theory}
%------------------------------------------------------------------
On the other side of the Curry-Howard correspondence
between programs and proofs,
infinite objects and their dynamics are the subject
of a blooming line of work around non-wellfounded proof systems
(following \cite{Santocanale.02,Sprenger.Dam.03}).
In such systems, proof derivations are potentially infinite trees
subject to a validity criterion ensuring the productivity
of the proof (and hence its correctness).
Such systems are used in particular for modal logics
(\eg an infinite proof can account for the unfolding of a
\enquote{forever} temporal modality)
or for logics extended with fixed point operators
(here an infinite proof typically has a greatest fixed point as a conclusion,
or a least fixed point as a hypothesis).
In practice such systems
(particularly their \emph{circular} or \emph{cyclic} fragments,
whose proof derivations are regular)
are very well-suited to proof search \cite{Tsukada.Unn.22}.

As usual in proof theory, good properties of a non-wellfounded proof system
(like correctness, or more practically the good behaviour
of proof search procedures) are ensured by \emph{cut-elimination},
\ie the fact that the cut rule is admissible
\cite{Afs.Klo.25,Baelde.Dou.Sau.16,Das.Pou.19,Fortier.San.13,Saurin.23}.
A standard way to prove this property is to show that each occurrence
of the cut rule can be \enquote{moved upwards} in the proof derivation:
repeated applications of this fact produce finite approximations
of a limit cut-free derivation.
This can be presented in the form of an infinite sequence
of rewriting steps (acting on the proof derivations);
unfortunately this rewriting of non-wellfounded proofs
does not really fit the usual presentation of infinitary rewriting,
which had to be partially adapted specifically for this purpose 
\cite{Baelde.Dou.Sau.16,Saurin.23}.
This calls for more general definitions encompassing
the infinitary rewriting of both 
infinitary terms and non-wellfounded proofs.

\myparagraph{The compression property}
%-------------------------------------
The \enquote{traditional} line of work on infinitary rewriting 
is based on ordinal-indexed strongly Cauchy convergent rewriting sequences 
(\ie rewriting sequences that do not only converge in the topological sense,
but such that in addition the computation steps occur deeper and deeper
in the rewritten objects). 
A desirable property of such rewriting systems is \emph{compression},
\ie the fact that rewriting sequences of arbitrary ordinal length
can be compressed to sequences of length $\omega$.
For example consider the first-order rewriting rules
$a \red_1 f(g(a))$ and $g(f(x)) \red_2 f(x)$,
then the (strongly converging) rewriting sequence
\begin{equation} \label{eq:intro-example-1}
	a \red_1 f(g(a)) \redi_1 f(g(f(g(\dots))))
   	\red_2 f(f(g(\dots))) \redi_2 f(f(f(\dots)))
\end{equation}
is of ordinal length $\omega \cdot 2$ but can be compressed
by interleaving the rewriting steps:
\begin{equation} \label{eq:intro-example-2}
	a \red_1 {\co f(g(a))} \red_1 f(g(f(g(a))))
   	\red_2 f( {\co f(g(a))} ) \redi_{1,2} f(f(f(\dots))).
\end{equation}
This example illustrates the key benefit of compression:
in a strongly converging rewriting sequence of length~$\omega$,
finite approximations of the limit are computed in finite time,
which is clearly not the case with sequences of bigger ordinal length
(in the example above, it takes $\omega + 1$ steps to produce
the two outermost $f$).
This \enquote{approximation} or \enquote{continuity} property
enjoyed by rewriting sequences of length~$\omega$
is at the heart of most practical motivations for the use
of infinitary rewriting.
For example, in
%the (001 variant of the) 
infinitary λ-calculus,
it is the reason why infinitary rewriting allows to provide
an easy proof of the continuous approximation theorem \cite{Cerda.24},
a result that is of paramount importance for the classical
study of the λ-calculus \cite{Barendregt}.

To the best of our knowledge, compression lemmas have been proved
for three kinds of term rewriting systems:
%\begin{itemize}
%\item 
	left-linear first-order rewriting systems
	\cite{Kennaway.Klo.Sle.Vri.95,Ketema.12,Lombardi.Rio.Vri.14,
	Endrullis.Han.Hen.Pol.Sil.18},
%\item 
	infinitary λ-calculi
	\cite{Kennaway.Klo.Sle.Vri.97,Bahr.10},
%\item 
	fully-extended, left-linear higher-order rewriting systems
	\cite{Ketema.Sim.11}.
%\end{itemize}
The compression property also plays a significant role
in non-wellfounded proof theory:
although direct proofs of cut-elimination typically
produce a rewriting sequence of length~$\omega$,
some other cut-elimination results
involve rewriting sequences of arbitrary ordinal length as they 
rely on translations
of formulæ and proof derivations from one logic to another,
where single cut-elimination steps are translated into
sequences of ordinal length~\cite{Saurin.23}.
In these cases the result crucially relies on a compression lemma
ensuring that finite approximations of a cut-free derivation
are produced in finitely many steps.
Again, this variety of compression lemmas
all enjoying more or less the same proof
suggests a generic, uniform treatment.

\myparagraph{The coinductive turn}
%---------------------------------
As an alternative to the traditional definition of infinite objects
\emph{via} ideal or metric completion,
a more recent and very fruitful line of work
is based on the use of \emph{coinduction}:
the metric completion of any algebraic type (\eg a type of terms)
can indeed be described as
the corresponding coalgebraic type \cite{Barr.93}.
This coinductive reformulation has been in particular 
extensively conducted in the setting of infinitary λ-calculi
\cite{Joachimski04,Endrullis.Pol.13,Czajka.20,Cerda.24}
and has noticeably allowed for a clean presentation of α-equivalence
for infinitary λ-terms \cite{Kurz.Pet.Sev.Vri.13,Cerda.25}.

Although this coinductive redefinition of infinitary terms
is now very standard,
finding a coinductive counterpart to strongly converging
rewriting sequences turns out to be less easy.
A first answer has been proposed for an infinitary λ-calculus
\cite{Endrullis.Pol.13}, 
but in a presentation corresponding solely to sequences of length $\omega$,
and a generic presentation of coinductive infinitary rewriting
was designed only recently by Endrullis, Hansen, Hendriks,
Polonsky and Silva \cite{Endrullis.Han.Hen.Pol.Sil.18}
but is limited to first-order term rewriting.

Once more, this calls for a generic presentation of coinductive rewriting
unifying and completing the aforementioned lines of work.
In particular expressing the infinitary rewriting induced by cut-elimination
on non-wellfounded proofs coinductively seems particularly natural,
and would pave the way for fully coinductive cut-elimination proofs.

\myparagraph{Contributions and organisation of the paper}
%--------------------------------------------------------
In \Cref{sec:infinintary-rewriting}, we propose a framework
for describing arbitrary non-wellfounded objects and rewriting systems
acting on such objects, 
which encompasses all the use cases mentioned so far.
We extend the correspondence between
topology-based and coinduction-based infinitary rewriting
to this generic framework.
In \Cref{sec:compression-lemma}, we formulate a characterisation of 
the compression property \enquote{factorising}
the part of the proof that can be performed at this level of generality,
hence identifying the key features 
a compressible infinitary rewriting system should enjoy;
to our knowledge this proof is new.
We then apply it to first-order coinductive rewriting 
and to the λ-calculus. 
Finally, in \Cref{sec:compressing-mumall} we recall 
the non-wellfounded proof system $\muMALLi$
for multiplicative-additive linear logic with fixed points, 
and we prove that compression also holds for
the rewriting induced by cut-elimination on this proof system.
The choice of $\muMALLi$ is justified by the fact that compression
in this setting is a cornerstone argument of the proof of
cut-elimination for $\muLLi$
(the non-wellfounded proof system for linear logic with fixed points),
in which non-wellfounded proof systems for a whole range of
logics (intuitionistic, classical, temporal, etc.)
can be embedded.
Finally, we recap and suggest directions for future work in 
\Cref{sec:conclusion}.

\section{Infinitary rewriting: From strong convergence to coinduction}
\label{sec:infinintary-rewriting}

In this section, we first recall how infinitary first-order rewriting
based on strong Cauchy convergence \cite{Kennaway.Klo.Sle.Vri.95}
can be reformulated using coinduction,
as exposed in \cite{Endrullis.Han.Hen.Pol.Sil.18},
and we similarly connect
the presentations of infinitary λ-calculi
based on convergence \cite{Kennaway.Klo.Sle.Vri.97}
and on coinduction \cite{Endrullis.Pol.13}.
Finally we provide a generic framework for infinitary rewriting
of non-wellfounded objects,
extending again the correspondence between the two views
on infinitary rewriting.

\subsection{A first example: First-order rewriting}
\label{ssec:fo}

\myparagraph{First-order terms}
%------------------------------
Fix a countable set $\Var$ of \defemph{variables}.
A \defemph{first-order signature} is given by a countable set $\Sigma$
of symbols endowed with an arity function $\arity : \Sigma \to \Nat$;
we fix such a signature.
%
%\begin{definition}
%\label{def:foterms}
	The set $\foterms$ of all \defemph{finite first-order terms}
	on the signature $\Sigma$ is defined by the following
	set of inductive rules:
	\begin{equation} \label{def:foterms}
		\begin{myprooftree}
		\infer0{ x \in \foterms }
		\end{myprooftree}
	\qquad
		\begin{myprooftree}
		\hypo{ s_1 \in \foterms }
		\hypo{ \dots }
		\hypo{ s_{\arity(\focons c)} \in \foterms }
		\infer3
			{ \focons c(s_1, \dots, s_{\arity(\focons c)}) \in \foterms }
		\end{myprooftree}.
	\end{equation}
	(Whenever we give such a set of rules where a rule features
	a variable $x$, a constructor $\focons c$,
	or more generally an element of an external set,
	we assume that there is one rule for each element of this set.)
%\end{definition}
The \defemph{truncation} of a term $s \in \foterms$ 
at depth $d \in \Nat$ is defined inductively by
	$\trunc[0]{s} \coloneqq \ast$,
	$\trunc[d+1]{x} \coloneqq x$, and
	$\trunc[d+1]{\focons c(s_1,\dots,s_k)} \coloneqq
		\focons c(\trunc{s_1}, \dots, \trunc{s_k})$,
where $\ast$ is a a fresh nullary symbol.
The set $\foterms$ is equipped with a metric $\dist$ defined by
$\dist(s,t) \coloneqq \inf \set{ 2^{-d} }[ \trunc s = \trunc t ]$.
%
%\begin{definition}
	The set $\foiterms$ of all \defemph{(infinitary) first-order terms}
	on the signature $\Sigma$ is the metric completion of $\foterms$
	\wrt $\dist$.
%\end{definition}
%
%\begin{lemma}[{\cite[Proposition 3.1]{Barr.93}}]
%	Equivalently, $\foiterms$ is the set defined by treating the rules
%	of \cref{def:foterms} coinductively.
%\end{lemma}
As a consequence of \cite[Proposition 3.1]{Barr.93},
$\foiterms$ can be equivalently presented as 
the set defined by treating the rules of \cref{def:foterms} coinductively.
(In short,
$\foterms \coloneqq \mu X.FX$ and $\foiterms \coloneqq \nu X.FX$ where
$FX \coloneqq \Var + \coprod_{\focons c \in \Sigma}
X^{\arity(\focons c)}$ and
the metric completion is carried by the canonical
(co)algebra morphism $\foterms \to \foiterms$.)

\myparagraph{First-order rewriting}
%----------------------------------
A \defemph{substitution} is any function $\Var \to \foiterms$.
Given a substitution $\sigma$ and a term $t$, we denote by $\sigma \cdot t$
the term  defined coinductively by
$\sigma \cdot x \coloneqq \sigma(x)$ and
$\sigma \cdot \focons c(s_1,\dots,s_k) \coloneqq
\focons c(\sigma \cdot s_1,\dots,\sigma \cdot s_k)$.
%
%\begin{definition} \label{def:fo-red}
	A (first-order) \defemph{rewrite rule} is a pair 
	$(l,r) \in \foiterms \times \foiterms$ such that
	\begin{ienumerate}
	\item $l$ is not a variable,
	\item all variables occurring in $r$ also occur in $l$.
	\end{ienumerate}
	An \defemph{infinitary (first-order) term rewriting system},
	\itrs in short, is a set of rewrite rules.
	An \itrs $\mathcal{R}$ defines a rewriting relation $\red$
	on $\foiterms$ by the following set of inductive rules:
	\begin{equation} \label{def:fo-red}
		\begin{myprooftree}
		\hypo{ (l,r) \in \mathcal{R} }
		\hypo{ \sigma : \Var \to \foiterms }
		\infer2{ \sigma \cdot l \redd{0} \sigma \cdot r }
		\end{myprooftree}
	\qquad
		\begin{myprooftree}
		\hypo{ s_i \redd{d} s'_i }
		\hypo{ 1 \leq i \leq \arity(\focons c) }
		\infer2
			{ \focons c(s_1, \dots, s_{\arity(\focons c)}) \redd{d+1}
			\focons c(s_1, \dots, s'_i, \dots, s_{\arity(\focons c)}) }
		\end{myprooftree}
	\end{equation}
	where, as syntactic sugar,
	we sometimes annotate $\red$ with {\color{depth} integers} indicating
	the depth at which the rewriting step occurs.
%\end{definition}
%
From now on we fix an \itrs $\mathcal{R}$.

\begin{definition} \label{def:sconv-sequences-fo}
	Given an ordinal $\gamma$,
	a \defemph{rewriting sequence} of length $\gamma$ 
	from $s_0$ to $s_\gamma$ is the data of
	terms $(s_\delta)_{\delta \leq \gamma}$ together with rewriting steps 
	$(s_\delta \redd{d_\delta} s_{\delta+1})_{\delta < \gamma}$.
	It is \defemph{strongly converging} if
	for every limit ordinal $\gamma' \leq \gamma$,
	\begin{ienumerate}
	\item $\lim_{\delta \to \gamma'} s_\delta = s_{\gamma'}$ and
	\item $\lim_{\delta \to \gamma'} d_\delta = \infty$.
	\end{ienumerate}
	We write $s \redi t$ whenever there is such a strongly converging
	rewriting sequence (of any ordinal length $\gamma$)
	such that $s_0 = s$ and $s_{\gamma} = t$.
\end{definition}

We now introduce an alternative definition using coinduction.
In general, the sets of rules we introduce to do so 
contain both inductive and coinductive rules;
to distinguish them, we draw double rules for the latter.
In such a system, non-wellfounded derivations are allowed
provided any infinite branch crosses infinitely many
coinductive rules.

\begin{definition} \label{def:redi-fo}
	For all ordinals $\gamma < \omega_1$,
	relations $\redi[\gamma]$ and $\redig[\gamma]$ on $\foiterms$
	are mutually defined by the rules:
	\[
		\begin{myprooftree}
		\hypo{ s \redsequence{\gamma}{m} s' }
		\hypo{ s' \redig[\gamma] t }
		\infer2[split]{ s \redi[\gamma] t }
		\end{myprooftree}
	\qquad
		\begin{myprooftree}
		\infer0[lift_\var{x}]{ \var x \redig[\gamma] \var x }
		\end{myprooftree}
	\qquad
		\begin{myprooftree}
		\hypo{ s_1 \redi[\gamma] s'_1 }
		\hypo{ \dots }
		\hypo{ s_{\arity(\focons c)} \redi[\gamma] s'_{\arity(\focons c)} }
		\infer[double]3[lift_{\focons c}]
			{ \focons c(s_1,\dots,s_{\arity(\focons c)}) 
			\redig[\gamma] \focons c(s'_1,\dots,s'_{\arity(\focons c)}) }
		\end{myprooftree}
	\]
	where the notation $s \redsequence{\gamma}{m} s'$
	stands for any sequence
%	\footnote{In other words,
%		the premise $s \redsequence{\gamma}{m} s'$ of rule \proofrule{split}
%		could be replaced with
%		$s \mathrel{((\mathord{\reds} \circ \bigcup_{\delta < \gamma} 
%			\mathord{\redig[\delta]})^m \circ \mathord{\reds})} s'$,
%		or even with
%		$s \mathrel{(\mathord{\reds} \circ \bigcup_{\delta < \gamma} 
%			\mathord{\redig[\delta]})^*} s'$
%		if we did not want to make $m$ explicit — which we do want,
%		as some proofs will be on induction over this $m$.}
	$	s \reds s'_1 \redig[\delta_1] t_1 \reds s'_2 \redig[\delta_2] \dots
		\redig[\delta_m] t_m \reds s' $
	such that $\forall 1 \leq i \leq m,\ \delta_i < \gamma$.
\end{definition}

Observe that $\redig[\gamma]$ is just $\redi[\gamma]$
under a constructor, which enforces strong convergence.
This Definition and the following Theorem are instantiated
on the example rewriting sequence from \cref{eq:intro-example-1}
in \cref{app:example}, \cref{app:example:eq:1}.

\begin{theorem}[{\cite[Theorem 5.2]{Endrullis.Han.Hen.Pol.Sil.18}}]
\label{the:M-equiv-C-fo}
	$\redi$ is the union of all relations $\redi[\gamma]$
	(over $\gamma < \omega_1$).
\end{theorem}

Let us mention that we marginally depart from the original definition 
\cite[Definition~4.2]{Endrullis.Han.Hen.Pol.Sil.18}
by annotating the relations with ordinals and 
ensuring that these annotating ordinals decrease
along certain branches of the derivations,
whereas the original authors explicitly add 
the (clearly equivalent) constraint that 
no branch should cross infinitely many $\longrightsquigarrow$.
Relaxing this constraint would give rise to \enquote{bi-infinite} rewriting,
corresponding to rewriting sequences that may also be transfinite
\enquote{to the left}, \ie towards their source.
In particular our ordinal annotations \emph{do not correspond}
to the ordinal length of a corresponding rewriting sequence!
(We can only say that any rewriting sequence of length $\gamma$
can be turned into $\redi[\gamma]$, see \cref{the:M-equiv-C}.)
We also replace their rule \proofrule{id},
which adds $s \redig[\gamma] s$ as an axiom,
by its version \proofrule{lift_{\var x}} restricted to variables.
This is enough for the defined relations to be reflexive
(see \cref{lem:redig-reflexive}).

\subsection{A second example: Infinitary λ-calculi}
\label{ssec:lambda}

\myparagraph{λ-terms}
%--------------------
The construction of infinitary λ-terms follows the same path,
let us summarise it.
The set $\lterms$ of \defemph{finite λ-terms} is defined 
inductively by:
%\[	\lterms \quad \ni \quad s,t,\dots \quad \coloneqq \quad 
%	x \ |\ λx.s \ |\ st \qquad (x \in \Var). \]
$s,t,\dots\ \coloneqq\ x \ |\ λx.s \ |\ st$ (where $x\in\Var$).
Fix $a,b,c \in \Bool^3$. The $abc$-truncation of $s \in \lterms$
at depth $d \in \Nat$ is defined inductively by
$\trunc[0]{s}[abc] \coloneqq \ast$, and by
$\trunc[d+1]{x}[abc] \coloneqq x$,
$\trunc[d+1]{λx.s}[abc] \coloneqq λx.\trunc[d+1-a]{s}[abc]$ and
$\trunc[d+1]{st}[abc] \coloneqq 
	\trunc[d+1-b]{s}[abc] \trunc[d+1-c]{t}[abc]$.
For instance, taking $abc = 001$ means that we consider that
one goes \enquote{deeper} in a λ-term only when entering
the argument of an application.
We equip $\lterms$ with a metric $\dist[abc]$ 
defined by $\dist[abc](s,t) \coloneqq \inf 
\set{ 2^{-d} }[ \smash{\trunc s [abc] = \trunc t [abc]} ]$.
The set $\liiiterms abc$ of \defemph{$abc$-infinitary λ-terms}
is defined as the metric completion of $\lterms$
\wrt $\dist[abc]$.

Equivalently, the following coinductive definition of $\liiiterms abc$
has been proposed \cite{Cerda.24}:
\begin{equation} \label{eq:liiiterms}
	\begin{myprooftree}[center]
	\hypo{\strut}
	\infer1{ x \in \liiiterms abc }
	\end{myprooftree}
\qquad
	\begin{myprooftree}[center]
	\hypo{ \later_a s \in \liiiterms abc }
	\infer1{ λx.s \in \liiiterms abc }
	\end{myprooftree}
\qquad
	\begin{myprooftree}[center]
	\hypo{ \later_b s \in \liiiterms abc }
	\hypo{ \later_c t \in \liiiterms abc }
	\infer2{ st \in \liiiterms abc}
	\end{myprooftree}
\qquad
	\begin{myprooftree}[center]
	\hypo{ S }
	\infer1[\later_0]{ \later_0 S}
	\end{myprooftree}
\qquad
	\begin{myprooftree}[center]
	\hypo{ S }
	\infer[double]1[\later_1]{ \later_1 S}
	\end{myprooftree}
\end{equation}
where $S$ stands for any statement
(in this case, of the form \enquote{$s \in \liiiterms abc$}).
While $\later_0$ is an \enquote{inductive guard}
(hence could as well be omitted),
$\later_1$ introduces a coinductive guard
(it is the \enquote{later} modality from
\cite{Nakano.00,Appel.Mel.Ric.Vou.07}):
$a = 1$ means that λ-abstraction is coinductive,
while $a = 0$ means that it is inductive
--- and similarly for $b$ and $c$
for the left and right sides of application.
(In short\footnotemark, using fixed points again,
$\lterms \coloneqq \mu X.F^{abc}(X,X)$ and
$\liiiterms abc \coloneqq \nu X_1.\mu X_0.F^{abc}(X_0,X_1)$
where $F^{abc}(X_0,X_1) \coloneqq 
\Var + \Var \times X_a + X_b \times X_c$.)
%
%\begin{remark}
\footnotetext{%
\label{rem:alpha}%
In this summary we carefully omitted to mention
any quotient by α-equivalence (\ie by renaming of bound variables).
A rigorous construction can be found in \cite{Kurz.Pet.Sev.Vri.13,Cerda.25};
it consists in working in the category of nominal sets and defining
$F^{abc}(X_0,X_1) \coloneqq \Var + [\Var]X_a + X_b \times X_c$,
where $[\Var]$ is a functor encoding \enquote{nameless} abstraction.
We do not detail these technicalities,
as all the following work could be straightforwardly transported
from the naive presentation above to the rigorous one.
}
%\end{remark}

\myparagraph{β-reduction}
%------------------------
%\begin{notation}
We denote by $s \subst t$ the λ-term obtained by substituting all free
occurrences of $x$ with $t$ in $s$, in a capture-avoiding manner%
%	\footnote{Formally this is defined by coinduction on α-equivalence
%	classes of λ-terms, see again \cite{Cerda.25}.}%
.
%\end{notation}
%
%\begin{definition}
	\defemph{β-reduction} is the relation on $\liiiterms abc$
	defined inductively by:
	\[
		\begin{myprooftree}
		\infer0{ (λx.s)t \redd{0} s \subst t }
		\end{myprooftree}
	\qquad
		\begin{myprooftree}
		\hypo{ u \redd{d} u' }
		\infer1{ λx.u \redd{d+a} λx.u' }
		\end{myprooftree}
	\qquad
		\begin{myprooftree}
		\hypo{ u \redd{d} u' }
		\infer1{ uv \redd{d+b} u'v }
		\end{myprooftree}
	\qquad
		\begin{myprooftree}
		\hypo{ v \redd{d} v' }
		\infer1{ uv \redd{d+c} uv' }
		\end{myprooftree}
	\]
	where again we sometimes annotate $\red$ with
	the depth at which the rewriting step occurs.
	As in \cref{def:sconv-sequences-fo}, we write $s \redi t$
	whenever there is a strongly converging β-reduction sequence
	from $s$ to $t$.
%\end{definition}
%
The following counterpart of \cref{the:M-equiv-C-fo}
states that this topological presentation
can be turned into a coinductive one.
This is another particular case of \cref{the:M-equiv-C},
to be stated in the next section.

\begin{theorem} \label{the:M-equiv-C-lambda}
	$\redi$ is the union of all relations $\redi[\gamma]$
	(over ordinals $\gamma < \omega_1$) defined by 
	the rules \proofrule{split} and \proofrule{lift_\var{x}}
	from \cref{def:redi-fo},
	the rules \proofrule{\later_0} and \proofrule{\later_1}
	from \cref{eq:liiiterms}, as well as the rules
	$\ 
		\begin{myprooftree}[center]
		\hypo{ \later_a u \redi[\gamma] u' }
		\infer1[lift_λ]{ λx.u \redig[\gamma] λx.u' }
		\end{myprooftree}
	\ $ and $\ 
		\begin{myprooftree}[center]
		\hypo{ \later_b u \redi[\gamma] u' }
		\hypo{ \later_c v \redi[\gamma] v' }
		\infer2[lift_@]{ uv \redig[\gamma] u'v'. }
		\end{myprooftree}
	\ $.
\end{theorem}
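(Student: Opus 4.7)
The plan is to prove both inclusions by transfinite induction on the ordinal annotation $\gamma$, following the strategy used for \cref{the:M-equiv-C-fo}. The main novelty with respect to the first-order case is the $abc$-dependent interleaving of $\later$ markers and the presence of β-reduction's substitution.

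For the direction from the coinductive to the topological presentation, I would extract from a derivation of $s \redi[\gamma] t$ a strongly converging β-reduction sequence. The rule \proofrule{split} at the root decomposes the derivation into a finite sequence of prefixes with strictly smaller ordinal annotation (yielding strongly converging subsequences by induction hypothesis), followed by a derivation of $s' \redig[\gamma] t$. The latter is built from the \proofrule{lift} rules traversing the term structure inductively, together with \proofrule{\later_0} and \proofrule{\later_1}; productivity forces any infinite branch to cross infinitely many \proofrule{\later_1}, each witnessing a unit increase of the depth at which the subsequent reductions will occur. Combined with the depth-counting clauses $\redd{d+a}$, $\redd{d+b}$, $\redd{d+c}$ of \cref{def:beta}, this yields exactly the conditions $\lim_{\delta \to \gamma'} s_\delta = s_{\gamma'}$ and $\lim_{\delta \to \gamma'} d_\delta = \infty$ required in \cref{def:sconv-sequences-fo}.

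Conversely, from a strongly converging sequence $s \redi t$ of length $\gamma$, I would build a coinductive derivation by transfinite induction on $\gamma$. The finite and successor cases reduce to straightforward applications of the \proofrule{lift} rules on top of the induction hypothesis. For a limit $\gamma$, strong convergence ensures that for each depth $d$ only finitely many reductions occur at depth $<d$; in particular only finitely many top-level reductions occur in the whole sequence. These finitely many top-level steps make up the finite part of a \proofrule{split}, while the infinitely many remaining steps decompose into strongly converging subsequences acting on proper subterms, for which the induction hypothesis (applied coinductively) yields derivations to be recombined via \proofrule{lift_λ}, \proofrule{lift_@}, \proofrule{\later_0} and \proofrule{\later_1}. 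Productivity of the resulting derivation then follows from $\lim d_\delta = \infty$.

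The main obstacle I expect is the correct handling of β-reduction's substitution: in the step $(λx.s)t \redd{0} s \subst t$, the argument $t$ may be duplicated or erased, so reductions performed \enquote{in} $t$ after the contraction must be re-attributed to all its copies when reorganising the sequence, and symmetrically for reductions \enquote{under} $λx.s$ — an issue absent from the first-order setting. A more bureaucratic difficulty is the exact accounting of $\later_0$ versus $\later_1$ markers required by each $abc$ convention at every layer. Finally, α-equivalence (\cref{rem:alpha}) adds a layer of technicality orthogonal to the main argument, best absorbed by working in nominal sets throughout.
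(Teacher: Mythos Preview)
Your direct approach by transfinite induction is workable, but the paper takes a different route: it does not prove \cref{the:M-equiv-C-lambda} on its own at all. Instead, it observes (in the sentence preceding the theorem and again after \cref{def:red}) that the $abc$-infinitary λ-calculus is an instance of the generic framework of \cref{ssec:arbitrary-derivations} via the rule set $\mathcal D_\Lambda^{abc}$ of \cref{exa:dtree}(3), so that \cref{the:M-equiv-C-lambda} is literally a special case of the general \cref{the:M-equiv-C}. What this buys is uniformity: the proof is done once for arbitrary non-wellfounded derivations and arbitrary sets of zero steps, and both the first-order and the λ-calculus equivalences drop out. Your approach would have to redo the argument with the $\later_0/\later_1$ bookkeeping by hand.

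More importantly, the obstacle you single out as \enquote{main} --- handling duplication and erasure in $s\subst t$ --- is a red herring here. In both presentations the zero step $(λx.s)t \redd{0} s\subst t$ is \emph{atomic}: neither the topological definition (\cref{def:sconv-sequences-fo}) nor the coinductive one ever looks inside a zero step, and the \proofrule{split} rule treats it as a single $\reds$ edge. There is no re-attribution of reductions to copies of $t$, because the equivalence is purely about how a given sequence of atomic steps is organised (ordinal-indexed with depth tending to infinity at limits, versus the \proofrule{split}/\proofrule{lift} coinductive shape), not about what those steps do internally. This is precisely why the paper's generalisation works over an \emph{arbitrary} relation $\redd{0}$. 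The substitution-specific difficulties you anticipate are real, but they belong to the Compression proof (\cref{lem:lam-pattern-fill}), not to this equivalence.
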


\subsection{Infinitary rewriting of many-sorted infinitary terms}
\label{ssec:arbitrary-derivations}

We will now introduce a description of rewriting for arbitrary
non-wellfounded objects, in the form of derivation trees
that we call \enquote{many-sorted terms}.
We want in particular to mix not only inductive constructors
and coinductive constructors,
but to feature constructors with mixed inductive and coinductive inputs,
as displayed by the example of $abc$-infinitary λ-calculi;
and instead of manipulating the rather heavy
$\later_0$ and $\later_1$ modalities and the corresponding rules,
we introduce the following kind of mixed rules:
\begin{equation*}
	\begin{myprooftree}
	\infer0{ x \in \liiiterms 001 }
	\end{myprooftree}
\qquad
	\begin{myprooftree}
	\hypo{ s \in \liiiterms 001 }
	\infer1{ λx.s \in \liiiterms 001 }
	\end{myprooftree}
\qquad
	\begin{myprooftree}
	\hypo{ s \in \liiiterms 001 }
	\hypo[doubleprem]{ t \in \liiiterms 001 }
	\infer2{ st \in \liiiterms 001}
	\end{myprooftree}
\end{equation*}
that are equivalent (here for defining $\liiiterms 001$)
to the system from \cref{eq:liiiterms}.

In general,
we first fix a set $\mathcal S$ of \defemph{sorts}.
It is typically an inductively defined set using
the singleton type, one or more alphabets,
as well as tuples, lists, multisets, etc.
For instance one could consider 
the set of two-sided sequents in a given logic,
encoded as the set of pairs of lists of formulæ.

\begin{definition} \label{def:dtree}
	A \defemph{construction rule} \proofrule{r} is given by
	\begin{ienumerate}
	\item its arity $\arity(\drule r) \in \Nat$,
	\item a partial function $\drule r : \mathcal S^{\arity(\drule r)}
		\partialto \mathcal S$ mapping its \defemph{premises}
		to its \defemph{conclusion},
	\item a map $\coind : [1,\arity(\drule r)] \to \Bool$
		indicating the (co)inductive
		character of each premise.
	\end{ienumerate}
	It is represented by:
%	\begin{equation} \label{eq:proofrule}
	$\ 
		\begin{myprooftree}[center]
		\hypo[ordoubleprem]{ S_1 }
		\hypo{ \dots }
		\hypo[ordoubleprem]{ S_{\arity(\drule r)} }
		\infer3[r]{ \drule r(S_1,\dots,S_{\arity(\drule r)}) }
		\end{myprooftree}
	\ $
%	\end{equation}
	where the dashed line under $S_i$ means a full line
	whenever $\coind(i) = 1$,
	and an absence of a line otherwise.
	(When we apply such a rule $\drule r$ to some arguments
	we implicitly suppose that they belong
	to its domain of definition.)
	We denote by $\dtrees$ the set of all (non-wellfounded) 
	derivation trees generated by 
	a family $\mathcal D$ of construction rules
	such that all infinite branches cross infinitely many
	double lines (\ie coinductive premises),
	that we call \defemph{many-sorted terms}.
\end{definition}

\begin{examples} \label{exa:dtree}
\begin{enumerate}
\item Pre-proofs in your favourite non-wellfounded proof system, \eg 
	for some logics with fixed points,
	can be presented as an instance of this system
	(see \cref{ssec:muMALLi} for a detailed presentation of 
	the non-wellfounded system $\muMALLi$ for multiplicative-additive
	linear logics with fixed points).
\item The set $\foiterms$ of first-order terms on the signature $\Sigma$
	can be seen as the many-sorted terms
	generated by $\mathcal S \coloneqq \set{\bullet}$ and
	$\mathcal D_\Sigma \coloneqq 
	\set{ \mathrm{Var}_x : \mathcal S^0 \to \mathcal S }[ x \in \Var ] \cup 
	\set{ \mathrm{Cons}_{\focons c} : \mathcal S^{\arity(\focons c)}
	\to \mathcal S }[ \focons c \in \Sigma ]$, together with
	$\coind[\mathrm{Cons}_{\focons c}](i) \coloneqq 1$
	for all $\focons c$ and $i$.
\item Similarly, $abc$-infinitary λ-terms can be seen as
	the many-sorted terms generated by 
	$\mathcal S \coloneqq \set{\bullet}$ and
	$\mathcal D_{\Lambda}^{abc} \coloneqq 
	\set{ \mathrm{Var}_x : \mathcal S^0 \to \mathcal S }[ x \in \Var ] \cup 
	\set{ \mathrm{Abs}_x : \mathcal S \to \mathcal S }[ x \in \Var ] \cup 
	\set{ \mathrm{App} : \mathcal S^2 \to \mathcal S }$,
	together with $\coind[\mathrm{Abs}_x](1) \coloneqq a$,
	$\coind[\mathrm{App}](1) \coloneqq b$ and
	$\coind[\mathrm{App}](2) \coloneqq c$.
\end{enumerate}
\end{examples}

%\begin{remark}
%	In this formalism the inductive or coinductive character
%	is not a property of the rules,
%	but of each of their premises.
%	We could however turn our presentation into a more traditional one,
%	by turning any rule \proofrule{r} as in \cref{eq:proofrule} into:
%	\[	\begin{myprooftree}
%		\hypo{ \later_{\coind(1)} S_1 }
%		\hypo{ \dots }
%		\hypo{ \later_{\coind(\arity(\drule r))} S_{\arity(\drule r)} }
%		\infer3{ \drule r(S_1,\dots,S_{\arity(\drule r)}) }
%		\end{myprooftree} \]
%	and adding the rules \proofrule{\later_0} and \proofrule{\later_1}
%	from \cref{eq:liiiterms} to the system.
%	This is exactly what we did when we defined $\liiiterms abc$
%	in \cref{eq:liiiterms}.
%\end{remark}

%\begin{notation}
	By abuse of notation, we write $s = \drule r(s_1,\dots,s_k)$
	to express that 
	\begin{ienumerate}
	\item the last construction rule of the many-sorted term $s$ is 
		\proofrule{r},
		so that there are sorts $S_1,\dots,S_k \in \mathcal S$
		such that $s$ has conclusion $\drule r(S_1,\dots,S_k)$,
	\item each derivation $s_i$ has conclusion $S_i$ and
		is the subtree rooted at the $i$th
		premise of the concluding \proofrule{r}.
	\end{ienumerate}
%\end{notation}
%
%\begin{remark}
	Using this notation,
	a definition of $\dtrees$ using fixed points is given by
	$	\dtrees \coloneqq \nu X_1.\mu X_0.\coprod_{\drule r \in \mathcal D}
		\drule r\left( X_{\coind(1)}, \dots, X_{\coind(\arity(\drule r))} 
		\right). $
	This construction can also be performed in richer categories
	than the category of sets, \eg the category of nominal sets,
	as already suggested in \cref{rem:alpha}.
	(Concentrating on such a \enquote{nu-mu} data type
	is what allows to express all the notions of rewriting we have in mind;
	one could however extend our work to any alternation of fixed points.)
%\end{remark}
%
In the following, we fix a set $\mathcal D$
as in \cref{def:dtree}.

\begin{definition}
	For each $S \in \mathcal S$, we define a nullary construction rule
	$\drule{trunc_{\mathnormal S}} : \mathcal S^0 \to \mathcal S$,
	$() \mapsto S$, \ie a rule adding $S$ as an axiom.
	The \defemph{truncation} at depth $d \in \Nat$ 
	of a many-sorted term $s \in \dtrees$ with conclusion $S \in \mathcal S$
	is defined inductively by
	$	\trunc[0]{s} \coloneqq \drule{trunc_{\mathnormal S}}()$
	and $\trunc[d+1]{\drule r(s_1,\dots,s_k)} \coloneqq \drule r(
			\trunc[d+1-\coind(1)]{s_1}, \dots, \trunc[d+1-\coind(k)]{s_k}).$
	The set $\dtrees$ is equipped with a metric $\dist$ defined by
	$\dist(s,t) \coloneqq \inf \set{ 2^{-d} }[ \trunc s \simeq \trunc t ]$,
	where $\simeq$ is the equivalence relation inductively generated
	on truncated many-sorted terms
	by equating all rules $\drule{trunc_{\mathnormal S}}$.
\end{definition}

\begin{definition} \label{def:red}
	A set $\mathord{\red} \subseteq \dtrees \times \dtrees$
	of \defemph{root steps} generates a relation $\red$ by
	the following set of inductive rules:
	\[	\begin{myprooftree}
		\hypo{ s \red t }
		\infer1{ s \redd 0 t }
		\end{myprooftree}
	\qquad
		\begin{myprooftree}
		\hypo{ s_i \redd d s'_i }
		\hypo{ 1 \leq i \leq \arity(\drule r) }
		\infer2{ \drule r(s_1,\dots,s_i,\dots,s_{\arity(\drule r)})
			\redd{d + \coind(i)}
			\drule r(s_1,\dots,s'_i,\dots,s_{\arity(\drule r)}) }
		\end{myprooftree} \]
	where, as in \cref{def:fo-red}, we sometimes annotate $\red$ with
	the depth at which the rewriting step occurs.
	As in \cref{def:sconv-sequences-fo}, we write $s \redi t$
	whenever there is a strongly converging reduction sequence
	from $s$ to $t$.
\end{definition}

%Notice that we do not require that root steps are \enquote{left-finite},
%\ie that their first component is finite,
%since the general formulation of \cref{the:compression} below
%holds even without this restriction.
%However in general left-infinite rules quickly give rise
%to non-compressible rewriting systems
%(think of the \itrs defined by the root steps
%$a \red f(a)$ and $f(f(\dots)) \red b$, in which $a \redi b$
%cannot be compressed),
%and in fact all the examples addressed in this article
%happen to be limited to left-finite root steps.

The constructions of first-order rewriting in a given \itrs
(\cref{ssec:fo}) and of β-reduction (\cref{ssec:lambda})
are particular cases of this definition applied to
the many-sorted terms generated by the sets of construction rules 
$\mathcal D_\Sigma$
and $\mathcal D_{\Lambda}^{abc}$ from \cref{exa:dtree}, respectively.
(In particular the root steps for a given \itrs are
given by the first rule of \cref{def:fo-red},
namely by the closure of all rewrite rules under substitutions.)
\Cref{the:M-equiv-C-fo,the:M-equiv-C-lambda}, which provided
a coinductive presentation of strongly converging rewriting sequences
in these particular settings, can also be extended
to the general framework we introduced.

\begin{definition} \label{def:redi}
	For all ordinals $\gamma < \omega_1$,
	a relation $\redi[\gamma]$ on $\dtrees$
	is defined by the following rules:
	\[	\begin{myprooftree}
		\hypo{ s \redsequence{\gamma}{m} s' }
		\hypo{ s' \redig[\gamma] t }
		\infer2[split]{ s \redi[\gamma] t }
		\end{myprooftree}
	\qquad
		\begin{myprooftree}
		\hypo[ordoubleprem]{ s_1 \redi[\gamma] s'_1 }
		\hypo{ \dots }
		\hypo[ordoubleprem]
			{ s_{\arity(\drule r)} \redi[\gamma] s'_{\arity(\drule r)} }
		\infer3[lift_r]{ \drule r(s_1,\dots,s_{\arity(\drule r)}) 
			\redig[\gamma] \drule r(s'_1,\dots,s'_{\arity(\drule r)}) }
		\end{myprooftree} \]
	where a rule \proofrule{lift_r} is defined 
	for each $\drule r \in \mathcal D$, 
	and its $i$th premise is coinductive \ifff $\coind(i) = 1$.
\end{definition}

\begin{theorem} \label{the:M-equiv-C}
	$\redi$ is the union of all relations $\redi[\gamma]$
	(over ordinals $\gamma < \omega_1$).
\end{theorem}
	
	\begin{proof}[Proof sketch]
	The proof is as in \cite[Thm.~5.2]{Endrullis.Han.Hen.Pol.Sil.18}.
	If there is a strongly convergent sequence of length $\gamma$
	from $s$ to $t$ (denote it by $s \redord[\gamma] t$),
	we build a derivation $s \redi[\gamma] t$
	by (co)induction, using the fact that only finitely many rewriting steps
	in the sequence are not performed
	above a coinductive premise of a construction rule,
	so that the sequence can be written:
	$	s \reds s'_1 \redord[\delta_1] t_1 \reds s'_2 \redord[\delta_2] \dots
		\redord[\delta_{m+1}] t $
	with $\forall 1 \leq i \leq m,\ \delta_i < \gamma$
	and $\delta_{m+1} \leq \gamma$,
	which exactly gives the premises of the rule \proofrule{split}.
	
	Conversely, the proof proceeds by well-founded induction on $\gamma$
	such that $s \redi[\gamma] t$. The induction hypothesis implies that
	$\mathord{\redsequence{\gamma}{m}} \subseteq \mathord{\redord}$,
	for all $m$, hence we obtain $s \redord s_1 \redig[\gamma] t$.
	By iterating in the inductive premises of $s_1 \redig[\gamma] t$,
	we obtain $s_1 \redord s'_1 \redig[\gamma] t$.
	Then we iterate in the coinductive premises of $s'_1 \redig[\gamma] t$,
	building $s'_1 \redord s'_2 \redord \dots$.
	Since this sequence does not contain any $\redd 0$ any more,
	we can prove strong convergence.
	\end{proof}

\section{A generic compression lemma}
\label{sec:compression-lemma}

As recalled in the introduction, the compression lemma is a result about
several kinds of infinitary rewriting systems stating
that strongly convergent rewriting sequences
of arbitrary ordinal length can be \enquote{compressed}
into sequences of length $\omega$ with the same source and target.
The goal of this section is to characterise this compression property
in the general coinductive framework we just introduced.
To do so, we first provide a coinductive counterpart to
strongly convergent sequences of length $\omega$,
then we prove the main result of the paper (\cref{the:compression}),
a characterisation identifying
\enquote{just what needs to be adapted to each system}
for compression to hold.
We conclude by applying the theorem to
our running examples, first-order rewriting and λ-calculus.
(Let us also already mention that \cref{app:example}
details the proof for the concrete example we gave in 
\cref{eq:intro-example-1,eq:intro-example-2}
in the introduction.)

Whereas \cref{the:M-equiv-C} above was an extension of 
\cite{Endrullis.Han.Hen.Pol.Sil.18}
to our newly introduced \enquote{many-sorted terms},
our characterisation of the compression property is
completely new to our knowledge.
In the unique coinductive proof of compression in the literature,
namely the Coq/Rocq proof of \cite{Endrullis.Han.Hen.Pol.Sil.18}
(which is limited to first-order rewriting,
hence only corresponds to our \cref{the:fo-Q})
the slightly different definition of $\redi$ 
featuring least and greatest fixed points to constrain
the use of coinduction (where we preferred ordinal annotations)
results in a completely different treatment of 
the inductive part of the proof.
All other existing proofs of compression properties
are formulated in a topology-based setting
(usually using the same transfinite induction
as in \cite{Kennaway.Klo.Sle.Vri.95},
which cannot be straightforwardly translated coinductively
since \enquote{the ordinal length of a rewriting sequence}
has no clear coinductive counterpart);
they do of course yield an indirect proof in the coinductive setting,
thanks to our \cref{the:M-equiv-C}.
Thus, besides its generality,
another advantage of our approach with respect to all those is
to provide a direct, explicit coinductive procedure for
compressing derivations of infinitary rewritings.

In the following
we again fix sets $\mathcal S$ and $\mathcal D$ as in \cref{def:dtree},
and a rewriting relation $\red$ on $\dtrees$ as in \cref{def:red}.

\subsection{Compressed rewriting sequences coinductively}

%In \cref{the:M-equiv-C}, we gave a coinductive presentation
%of arbitrary strongly converging rewriting sequences,
%but this presentation does not include a characterisation
%of the ordinal length of a sequence.
%As a consequence, we also 
We first
need to give a coinductive presentation
of strongly converging rewriting sequences of length at most $\omega$.
This is quite straightforward,
and is an adaption of \cite[Equation~9.3]{Endrullis.Han.Hen.Pol.Sil.18}.

\begin{definition} \label{def:redo}
	The relation $\redo$ of \defemph{compressed infinitary rewriting}
	is defined on $\dtrees$ by 
	$\mathord{\redo} \coloneqq \mathord{\redi[0]}$,
	\ie it is defined by the following rules \proofrule{split^\omega}
	and, for each $\drule r \in \mathcal D$, \proofrule{lift_r^\omega}:
	\[
		\begin{myprooftree}
		\hypo{ s \reds s' }
		\hypo{ s' \redog t }
		\infer2[split^\omega]{ s \redo t }
		\end{myprooftree}
	\qquad
		\begin{myprooftree}
		\hypo[ordoubleprem]{ s_1 \redo s'_1 \vphantom{s'_(} }
		\hypo{ \dots }
		\hypo[ordoubleprem]
			{ s_{\arity(\drule r)} \redo s'_{\arity(\drule r)} }
		\infer3[lift_r^\omega]{ \drule r(s_1,\dots,s_{\arity(\drule r)}) 
			\redog \drule r(s'_1,\dots,s'_{\arity(\drule r)}) }
		\end{myprooftree}
	\]
	where, as in \cref{def:dtree}, the  $i$th premise of 
	\proofrule{lift_r^\omega} is coinductive \ifff $\coind(i) = 1$.
\end{definition}

\begin{lemma}
	For $s,t \in \dtrees$, $s \redo t$ \ifff 
	there is a strongly converging sequence of length at most $\omega$
	from $s$ to $t$.
\end{lemma}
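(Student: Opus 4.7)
The plan is to prove the two implications separately, using coinduction on each side. I first observe that the rules \proofrule{split^\omega} and \proofrule{lift_r^\omega} defining $\redo$ are literally the rules \proofrule{split} and \proofrule{lift_r} of \cref{the:M-equiv-C} instantiated at $\gamma = 0$, so $\redo = \redi[0]$. The statement of the lemma is therefore that strongly converging sequences of length at most $\omega$ are exactly the zeroth level of the hierarchy $(\redi[\gamma])_{\gamma < \omega_1}$; this cannot be directly read off \cref{the:M-equiv-C}, but its proof closely follows the same template.

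For the ($\Leftarrow$) direction, I would define $R \coloneqq \set{ (s,t) }[ s \to t \text{ strongly converging of length} \leq \omega ]$ and prove $R \subseteq \redo$ by coinduction, \ie by exhibiting for each $(s,t) \in R$ a guarded step of the coinductive definition whose premises remain in $R$ (or in $\reds$). If the sequence is finite, $s \reds t$ and it suffices to derive $t \redog t$, which in turn reduces coinductively to $t_i \redo t_i$ at each subtree (so reflexivity of $\redog$ is simultaneously established); combined with $s \reds t$ this gives $s \redo t$ through \proofrule{split^\omega}. If the sequence has length exactly $\omega$, strong convergence ($\lim_n d_n = \infty$) implies that only finitely many steps occur at depth $0$; let $s \reds s'$ be their composition. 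Writing $s' = \drule r(s'_1,\dots,s'_k)$ and $t = \drule r(t_1,\dots,t_k)$, inspection of \cref{def:red} shows that every step of depth $\geq 1$ acts inside exactly one subterm, so the tail of the sequence splits into $k$ strongly converging sequences of length $\leq \omega$ from $s'_i$ to $t_i$. Each pair $(s'_i,t_i)$ lies in $R$, so applying \proofrule{split^\omega} followed by \proofrule{lift_r^\omega} provides the needed derivation step.

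For the ($\Rightarrow$) direction, I would invert a given derivation of $s \redo t$. Its root application of \proofrule{split^\omega} yields a finite prefix $s \reds s'$ and a derivation of $s' \redog t$, whose root \proofrule{lift_r^\omega} in turn gives $s' = \drule r(s'_1,\dots,s'_k)$, $t = \drule r(t_1,\dots,t_k)$, and derivations of $s'_i \redo t_i$ for each $i$. By coinduction, each such derivation yields a strongly converging sequence of length $\leq \omega$ from $s'_i$ to $t_i$. I would then combine these $k$ sequences into a single sequence from $s'$ to $t$ by any fair interleaving (\eg through a bijection $\Nat \to \Nat \times [1,k]$); by \cref{def:red}, a step of subtree-depth $d$ in position $i$ becomes a step of depth $d + \coind(i) \geq d$ in the whole tree, so if each subsequence's depths tend to infinity then so does the combined one, preserving strong convergence. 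Concatenating with $s \reds s'$ yields the desired $\leq \omega$-sequence from $s$ to $t$.

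The main obstacle is to turn the informal construction in the ($\Rightarrow$) direction into a properly corecursive one: the resulting $\omega$-sequence should be built uniformly from the (possibly non-wellfounded) derivation tree, with the interleaving exhausting every position while keeping the depth going to infinity. In the ($\Leftarrow$) direction, the crucial point is the decomposition of the tail past the cofinite prefix where all steps descend below the root, which relies precisely on the shape of the rule in \cref{def:red} ensuring that a step of positive depth at a node $\drule r(\dots)$ is, up to the $\coind(i)$ shift, exactly a step inside one of the subterms.
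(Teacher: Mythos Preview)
The paper does not provide a proof of this lemma: it is stated immediately after the definition of $\redo$, which the authors describe as a straightforward adaptation of \cite[Equation~9.3]{Endrullis.Han.Hen.Pol.Sil.18}, and no argument is given. So there is nothing to compare your proposal against; I can only assess it on its own.

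Your outline is essentially correct and is the expected argument. Two points deserve sharpening. First, in both directions you write ``by coinduction'', but the rules \proofrule{lift_r^\omega} have \emph{mixed} premisses: those with $\coind(i)=0$ are inductive. For such a premiss you cannot simply place $(s'_i,t_i)$ back in $R$ and call it guarded; you must recurse inductively. This recursion is well-founded precisely because, by the definition of $\dtrees$, any path following only inductive premisses is finite. You should make this nested induction--coinduction explicit rather than folding it into a single coinduction.

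Second, in the $(\Leftarrow)$ direction your phrase ``let $s \reds s'$ be their composition'' (meaning the composition of the depth-$0$ steps) is imprecise: the depth-$0$ steps may be interspersed with deeper ones and cannot in general be composed in isolation. What you want is simply a prefix $s \reds s_N$ such that all steps from index $N$ onward have depth $\geq 1$; such an $N$ exists because $d_n \to \infty$. After that the root constructor is preserved and your splitting goes through.

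Your closing paragraph correctly identifies the remaining work: making the interleaving in $(\Rightarrow)$ uniformly productive. Once you fix a fair schedule exhausting each coordinate, the depths in the combined sequence still tend to infinity (each lifted step has depth at least that of the original, shifted by~$\coind(i)$), so strong convergence is preserved.
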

%
%\begin{definition}
	We say that $\red$ has the \defemph{compression property}
	if $\mathord{\redi} = \mathord{\redo}$.
%\end{definition}

\subsection{A characterisation of the compression property}

We reach the core technical content of this paper,
where we provide a characterisation of compression
for any rewriting system presented in the generic framework
introduced in the previous section.
We state this result just after the following useful observations
(which are all straightforward consequences of \cref{def:redi}).

\begin{lemma}
\begin{thmenumerate}*{lemma}
\item If $s \redig[\gamma] t$ and $\delta \geq \gamma$,
	then $s \redig[\delta] t$.
	\label[lemma]{lem:redig-higher-ordinals}
\item The relation $\mathord{\redig[\gamma]}$ is reflexive.
	\label[lemma]{lem:redig-reflexive}
\item If $s \redig[\gamma] t \redig[\delta] u$,
	then $s \redig[\epsilon] u$ 
	for $\epsilon \coloneqq \max(\gamma+1, \delta)$.
	\label[lemma]{lem:redig-max-transitivity}
\item Items (1)--(3)
	also hold with $\redi[\gamma]$, $\redi[\delta]$
	instead of $\redig[\gamma]$, $\redi[\delta]$.
	\label[lemma]{lem:redi-technical}
\item As a consequence, the relation $\mathord{\redi}$
	is reflexive and  transitive.
	\label[lemma]{lem:redi-transitivity}
\item The inclusions
	$\mathord{\redig[\delta]} \subset \mathord{\redi[\delta]}$ hold
	for every ordinal $\delta$,
	hence also $\mathord{\redig} \subset \mathord{\redi}$.
	\label[lemma]{lem:redig-included-in-redi}
\end{thmenumerate}
\end{lemma}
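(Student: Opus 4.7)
The plan is to handle the six claims in the listed order, noting that (1), the core of (3), and the $\redi$-analogues in (5) must be established by mutual induction/coinduction on the two relations $\redig[\gamma]$ and $\redi[\gamma]$, while (4) and (6) then follow as immediate corollaries.

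For (1), the observation is that the ordinal annotation $\gamma$ enters the definitions of $\redig[\gamma]$ and $\redi[\gamma]$ only through the side condition $\delta_i < \gamma$ on the intermediate steps in \proofrule{split}; relabeling every ordinal annotation by $\delta \geq \gamma$ in a given derivation preserves validity. Formally, I would show by mutual coinduction that the relations $R_\delta \coloneqq \set{(s,t)}[\exists\gamma \leq \delta : s \redig[\gamma] t]$ and $R'_\delta \coloneqq \set{(s,t)}[\exists\gamma \leq \delta : s \redi[\gamma] t]$ are closed under the coinductive unfoldings of $\redig[\delta]$ and $\redi[\delta]$ respectively. For (2), I would proceed by guarded coinduction: writing any $s \in \dtrees$ as $\drule r(s_1,\dots,s_k)$, apply \proofrule{lift_r} whose premises $s_i \redi[\gamma] s_i$ each reduce, via \proofrule{split} with an empty rewriting prefix, to the coinductive hypothesis $s_i \redig[\gamma] s_i$. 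The analogue of (2) for $\redi$ follows in one further application of \proofrule{split}.

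The technical core is (3), which I would obtain alongside the transitivity of $\redi$ from (5). For the $\redi$-side, given $s \redi[\gamma] t$ and $t \redi[\delta] u$, unfolding both via \proofrule{split} yields
\[
s \reds \cdots \redig[\gamma] t \reds \cdots \reds t'' \redig[\delta] u,
\]
which I recombine as a single $\redsequence{\epsilon}{m+n+1}$-prefix followed by $\redig[\epsilon] u$: the $m+n$ original intermediate $\redig$-steps retain annotations strictly below $\epsilon$ (since $\delta_j < \gamma < \epsilon$ and $\delta'_j < \delta \leq \epsilon$), the middle $\redig[\gamma]$ fits as a new intermediate step because $\gamma < \epsilon$, and the tail $\redig[\delta] u$ is promoted to $\redig[\epsilon] u$ via (1) using $\delta \leq \epsilon$. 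Applying \proofrule{split} to the concatenation gives $s \redi[\epsilon] u$. For (3) itself, the derivations of $s \redig[\gamma] t$ and $t \redig[\delta] u$ must end with the same \proofrule{lift_r} (or with \proofrule{lift_\var x}, which is trivial), exposing premises $s_i \redi[\gamma] t_i$ and $t_i \redi[\delta] u_i$; the $\redi$-transitivity just established combines them into $s_i \redi[\epsilon] u_i$, and another application of \proofrule{lift_r} concludes.

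Transitivity of $\redig$ in (4) and of $\redi$ in (5) then follow from (3) and its $\redi$-analogue by existentially eliminating the ordinals. The remaining $\redi$-facts in (5) parallel the arguments already given. Finally, (6) is immediate: given $s \redig[\delta] t$, applying \proofrule{split} with the empty prefix $s \reds s$ (that is, $m = 0$) yields $s \redi[\delta] t$. The main obstacle is ensuring productivity of the coinductive arguments in (2) and (3): every recursive appeal to the coinductive hypothesis must be guarded by an intervening \proofrule{lift_r}, which holds automatically because this is the only rule introducing the relevant coinductive premises in the derivation trees.
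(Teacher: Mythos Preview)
Your proposal is correct; the paper states this lemma without proof (as a list of ``useful observations''), so there is nothing to compare against directly. Your treatment of each item is sound: for (1) the ordinal annotation enters only as an upper bound in the side condition of \proofrule{split}; for the $\redi$-analogue of (3) the concatenation trick works because $\gamma < \gamma + 1 \leq \epsilon$ lets the middle $\redig[\gamma]$ become one of the intermediate steps of the new $\redsequence{\epsilon}{m+n+1}$ prefix while the tail is promoted via (1); and (3) itself then follows by a single application of \proofrule{lift_r} using the $\redi$-transitivity on the exposed premises.

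Two minor remarks. First, the dependency you describe between (3) and the $\redi$-transitivity is not really mutual but sequential: you prove $\redi$-transitivity from (1) alone, and (3) is then a one-step corollary with no further recursion. Second, in (2) (and implicitly in (1)) the argument is by \emph{mixed} induction and coinduction rather than pure guarded coinduction: only the premises of \proofrule{lift_r} with $\coind(i)=1$ are coinductive, and the inductive ones are handled by the well-foundedness built into $\dtrees = \nu X_1.\mu X_0.F(X_0,X_1)$ (any path through inductive premises alone reaches a coinductive premise or a leaf in finitely many steps). Neither point affects the validity of your argument.
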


%This leads us to the main theorem of this article.

\begin{theorem} \label{the:compression}
	We define the following properties of the rewriting relation $\red$:
	\begin{align*}
		\property{P}_\delta: & \quad \forall n \in \Nat,\ 
			\forall s,s' \in \dtrees,\wide
			s \redsequence{\delta}{n} s' \wide[\Rightarrow]
			\exists s'' \in \dtrees,\ \exists \epsilon < \delta,\wide 
			s \reds s'' \redigs[\epsilon] s', \\
		\property{Q}: & \quad \forall \delta,\ 
			\forall s,t,t' \in \dtrees, \wide
			\property P_\delta
			\wide[\wedge] s \redi[\delta] t \red t'
			\wide[\Rightarrow] \exists s' \in \dtrees,\wide s \reds s' 
				\redi[\delta] t'.
	\end{align*}
	Then $\red$ has the compression property \ifff
	the property $\property{Q}$ holds.
\end{theorem}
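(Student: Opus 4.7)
I would split the equivalence into its two implications. The forward direction (Compression $\Rightarrow \property{Q}$) is a short calculation built on the lemmas already collected. The backward direction ($\property{Q} \Rightarrow$ Compression) is the substantial one and goes by transfinite induction on the ordinal index.

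\textbf{Forward direction.} Assume $\redi = \redo$ together with the premises of $\property{Q}$. The single step $t \red t'$ embeds into $\redi[\delta]$ via rule \proofrule{split} and reflexivity of $\redig$ (\cref{lem:redig-reflexive}), so transitivity (\cref{lem:redi-technical}) yields $s \redi t'$; Compression collapses this to $s \redi[0] t'$; inverting rule \proofrule{split^\omega} produces $s'$ with $s \reds s' \redog t'$; and the monotonicity/inclusion lemmas (\cref{lem:redig-higher-ordinals,lem:redig-included-in-redi}) upgrade the tail to $s' \redi[\delta] t'$. The hypothesis $\property{P}_\delta$ is not needed here.

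\textbf{Backward direction.} Assuming $\property{Q}$, I would prove by transfinite induction on $\delta$ the simultaneous statements (A) $\property{P}_\delta$ holds and (B) $\redi[\delta] \subseteq \redo$; Compression then follows since $\redi = \bigcup_\delta \redi[\delta]$ and $\redo \subseteq \redi$ is immediate. For (A), I induct on the number $n$ of $\redig$-steps in $s \redsequence{\delta}{n} s'$: the case $n = 0$ is reflexivity with $\epsilon = 0 < \delta$; in the inductive step I isolate the first $\redig[\delta_1]$-step (with $\delta_1 < \delta$), apply the inner IH to the tail, and invoke the closure property at $\delta_1$---which $\property{Q}$ delivers because the outer IH gives $\property{P}_{\delta_1}$---to postpone the intermediate $\reds$-steps past the $\redig[\delta_1]$. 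The fresh $\redi[\delta_1]$-step is re-inverted via \proofrule{split} and recompressed by $\property{P}_{\delta_1}$ again; collecting ordinals through \cref{lem:redig-max-transitivity} yields an $\epsilon < \delta$ as required. For (B), given $s \redi[\delta] t$, invert \proofrule{split} to obtain $s \redsequence{\delta}{m} s' \redig[\delta] t$; apply (A) at $\delta$ to rewrite the prefix as $s \reds s'' \redigs[\epsilon] s'$ with $\epsilon < \delta$; then merge the $\redigs[\epsilon]$-tail with the outer $\redig[\delta]$ into a single $\redig[\delta]$-step via \cref{lem:redig-max-transitivity}, obtaining $s \reds s'' \redig[\delta] t$. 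Inverting \proofrule{lift_r} exposes premises $s''_i \redi[\delta] t_i$, to which the coinductive hypothesis applies (the recursive call is guarded by the double-line coinductive premises of \proofrule{lift_r^\omega}), giving $s''_i \redo t_i$; assembly via \proofrule{lift_r^\omega} and \proofrule{split^\omega} concludes $s \redo t$.

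\textbf{Main obstacle.} The most delicate point is the ordinal bookkeeping in step (A): compressing may expose new $\redi[\delta_1]$-steps which must themselves be re-expanded and re-compressed---requiring the outer transfinite IH on $\delta$ to have already secured $\property{P}_{\delta_1}$---and the cascade of maxes, successors, and $\redigs$-compositions must stay strictly below $\delta$. This relies on the closure of $\{\epsilon \mid \epsilon < \delta\}$ under finite successor and max: automatic for limit $\delta$, and immediate for successor $\delta = \delta' + 1$ since all intermediate ordinals are bounded by $\delta'$. Step (B) is routine by comparison, provided each recursive call happens across a double-line premise of \proofrule{lift_r^\omega}, which is guaranteed by the guardedness built into the definition of $\redo$.
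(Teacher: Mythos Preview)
Your proposal is essentially the paper's proof. The paper decouples your step~(A) into a standalone lemma (\cref{lem:Q-implies-P}) proved \emph{before} the coinduction rather than simultaneously with it---which is harmless since your~(A) never invokes~(B)---and it streamlines the inner induction by first passing to a weakening $\property{Q}'$ of $\property{Q}$ (replacing $\redi[\delta]$ by $\redig[\delta]$ on both sides), which lets one peel off the \emph{last} $\redig$-block and iterate $\property{Q}'$ without your re-invert/re-compress detour.

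One point does need tightening. In~(B) you assert that ``each recursive call happens across a double-line premise of \proofrule{lift_r^\omega}''; this is false in general, since $\coind(i)$ may equal~$0$ for some~$i$, making the corresponding premise of \proofrule{lift_r^\omega} inductive rather than coinductive. The paper handles this by proceeding ``inductively (resp.\ coinductively) in the inductive (resp.\ coinductive) premisses'': the recursion through single-line premises is well-founded because every infinite branch of the underlying derivation (in $\dtrees$) must cross infinitely many double lines, so the unguarded calls terminate and the guarded ones carry the coinduction. Your overall scheme survives, but the justification you give for productivity is not the right one.
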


The property $\property Q$ seems quite convoluted at first sight,
but is in fact not very surprising:
it essentially means that
given a certain induction hypothesis
(namely $\property{P}_\delta$),
a rewriting sequence of ordinal length $\delta + 1$
(for an arbitrary infinite ordinal $\delta$)
can be turned into an equivalent sequence of length $p + \delta = \delta$
for some $p \in \Nat$
(recall that \cref{the:M-equiv-C} translates rewriting sequences
of length $\delta$ into $\redi[\delta]$).
When one tries to prove a compression lemma in a traditional way,
using a transfinite induction and topological arguments,
the key case of the proof is exactly the same,
namely the case of a successor ordinal \cite[§~12.7]{Terese}.

We now give the proof of \cref{the:compression},
which starts with the following key lemmas.

\begin{lemma} \label{lem:P-implies-preponement-root-steps}
	For all ordinal $\delta$
	such that $\property{P}_\delta$ holds
	and for all many-sorted terms $s,t \in \dtrees$,
	if $s \redi[\delta] t$
	then there exists $s' \in \dtrees$
	such that $s \reds s' \redig[\delta] t$.
\end{lemma}
	
	\begin{proof}
	The derivation of $s \redi[\delta] t$
	ends with the rule \proofrule{split},
	with premises $s \redsequence \delta n t'$ and $t' \redig[\delta] t$
	(for some $n \in \Nat$ and $t' \in \dtrees$).
	By $\property{P}_\delta$ there are $s' \in \dtrees$
	and $\epsilon < \delta$ such that $s \reds s' \redigs[\epsilon] t'$,
	and we can conclude by
	\cref{lem:redig-max-transitivity,lem:redig-higher-ordinals}.
	\end{proof}

\begin{lemma} \label{lem:Q-implies-P}
	If $\property Q$ holds then
	$\property{P}_\gamma$ holds for any ordinal $\gamma$.
\end{lemma}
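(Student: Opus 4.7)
The plan is to prove $\property P_\gamma$ by transfinite induction on $\gamma$. For a limit $\gamma = \lambda$, given $s \redsequence{\lambda}{n} s'$, the finitely many annotating ordinals $\delta_1, \dots, \delta_n$ are all strictly less than $\lambda$, so their maximum $\epsilon$ and $\epsilon + 1$ are still $< \lambda$. The same sequence thus witnesses $s \redsequence{\epsilon+1}{n} s'$, and invoking $\property P_{\epsilon+1}$ (IH) yields the desired preponement with an ordinal strictly less than $\epsilon + 1 < \lambda$.

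The substantial case is the successor $\gamma = \delta + 1$, where $\property Q$ enters via the inductive hypothesis $\property P_\delta$. My plan is to establish the preponement lemma
\[ a \redigs[\delta] b \reds c \;\Longrightarrow\; \exists a',\ a \reds a' \redigs[\delta] c. \]
Granted this lemma, $\property P_{\delta+1}$ follows by an inner induction on the number $n$ of big steps in $s \redsequence{\delta+1}{n} s'$: after relaxing each $\redig[\delta_i]$ to $\redig[\delta]$ (\cref{lem:redig-higher-ordinals}), the induction step combines the IH applied to the truncated prefix $s \redsequence{\delta+1}{n-1} s'_n$, the concatenation within $\redigs[\delta]$ of the newly produced $\redigs[\delta] s'_n$ with the remaining $s'_n \redig[\delta] t_n$, and a single application of the preponement lemma to push the trailing $t_n \reds s'$ past this $\redigs[\delta]$, yielding $s \reds s'' \redigs[\delta] s'$, \ie $\property P_{\delta+1}$ with $\epsilon = \delta$.

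The preponement lemma itself is proved in layers. The innermost case, $a \redig[\delta] b \red c$, follows by embedding $a \redig[\delta] b$ as $a \redi[\delta] b$ (\cref{lem:redig-included-in-redi}), applying $\property Q$ (legal as $\property P_\delta$ holds by IH) to obtain $a \reds a'' \redi[\delta] c$, and then invoking \cref{lem:P-implies-preponement-zero-steps} to convert $a'' \redi[\delta] c$ back into $a'' \reds a' \redig[\delta] c$. Iterating this once along a $\reds$ (by induction on its length) and once along a $\redigs[\delta]$ (by induction on its number of $\redig[\delta]$ hops) yields the full lemma. The main obstacle is the ordinal bump in \cref{lem:redig-max-transitivity} — naively composing two $\redig[\delta]$ steps would produce $\redig[\delta+1]$ and defeat the argument — which is precisely why the whole reasoning must be carried at the level of $\redigs[\delta]$ and why \cref{lem:P-implies-preponement-zero-steps} is brought in to bring $\redi[\delta]$ back down to a single $\redig[\delta]$ step before any composition is performed.
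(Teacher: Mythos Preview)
Your argument is correct, and at its core it uses the same mechanism as the paper: derive from $\property Q$ (together with the inductive hypothesis $\property P_\delta$) a ``$\redig$-level'' commutation --- your innermost preponement case is precisely the paper's auxiliary property $\property Q'$, obtained in the same way via \cref{lem:redig-included-in-redi} and \cref{lem:P-implies-preponement-zero-steps} --- and then run a double induction (on the ordinal and on the number $m$ of big steps).

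The organisation differs, though. The paper does \emph{not} split into limit and successor cases: it performs a single well-founded induction on $\gamma$ and, in the inner induction on $m$, peels off the \emph{last} block $s'_m \redig[\delta_m] t'_m \reds s'$, applies iterated $\property Q'$ at level $\delta_m$ (legal because $\delta_m < \gamma$ gives $\property P_{\delta_m}$ directly from the well-founded IH), and only \emph{then} invokes the inner IH on $m-1$. You instead handle limits by the $\max+1$ trick, and in the successor case you first relax all $\delta_i$ to the fixed $\delta$, apply the inner IH on $n-1$ to the \emph{prefix}, and finish with one call to your full preponement lemma. Your route costs an extra case analysis and the separate preponement lemma; the paper's route is more uniform and never needs to relax ordinals, since it invokes $\property P_{\delta_m}$ at the exact level it needs. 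Both are valid; the paper's version is just a bit more streamlined.
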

	
	\begin{proof}
	Instead of $\property Q$, 
	we suppose the following property $\property Q'$:
	\[	\forall \delta,\ 
		\forall s,t,t' \in \dtrees, \wide
		\property P_\delta
		\wide[\wedge] s \redig[\delta] t \red t'
		\wide[\Rightarrow] \exists s' \in \dtrees,\wide s \reds s' 
			\redig[\delta] t'. \]
	The only difference with the definition of $\property Q$
	in \cref{the:compression} is that the two occurrences of $\redi[\delta]$
	have been replaced with $\redig[\delta]$.
	This is straightforwardly weaker by
	\cref{lem:redig-included-in-redi} (for the first occurrence)
	and \cref{lem:P-implies-preponement-root-steps} (for the second one).
	
	Then we proceed by well-founded induction on $\gamma$, 
	\ie we suppose that 
	$\property{P}_\delta$ holds for any ordinal $\delta < \gamma$
	and  we	prove that $\property P_{\gamma}$ holds, that is to say:
	\[	\forall m \in \Nat,\ 
		\forall s,s' \in \dtrees,\wide
		s \redsequence{\gamma}{m} s' \wide[\Rightarrow]
		\exists s'' \in \dtrees,\ \exists \delta < \gamma,\wide 
		s \reds s'' \redigs[\delta] s'. \]
	We do this by induction on $m \in \Nat$.
	Take $s,s' \in \dtrees$ such that $s \redsequence{\gamma}{m} s'$.
	We want to build $s'' \in \dtrees$ such that
	$\exists \delta < \gamma,\ s \reds s'' \redigs[\delta] s'$.
	If $m = 0$ the result is immediate: observe that
		$s \redsequence{\gamma}{0} s'$ means that $s \reds s'$
		and use \cref{lem:redig-reflexive}.
	Otherwise, $s \redsequence{\gamma}{m} s'$
		can be decomposed as follows:
		$ 	s \redsequence{\gamma}{m-1} s'_m 
			\redig[\delta_m] t'_m \reds s', $
		with $\delta_m < \gamma$.
		By iterated applications of $\property Q'$, using 
		the induction hypothesis on $\delta_m$
		(\ie the fact that $\property P_{\delta_m}$ holds),
		there is an $s''_m \in \dtrees$ such that:
		$ 	s \redsequence{\gamma}{m-1} s'_m 
			\reds s''_m \redigs[\delta_m] s'$. 
		Notice that $s \redsequence{\gamma}{m-1} s'_m \reds s''_m$
		can be reformulated as 
		$s \redsequence{\gamma}{m-1} s''_m$,
		whence we can apply the induction hypothesis on $m-1$ and obtain
		a term $s''$ and an ordinal $\epsilon < \gamma$ such that:
		$	s \reds s'' \redigs[\epsilon] s''_m \redigs[\delta_m] s'$, 
		which can be simplified as
		$	s \reds s'' \redigs[\delta] s' $
		with $\delta \coloneqq \max(\epsilon,\delta_m) < \gamma$,
		thanks to \cref{lem:redig-higher-ordinals}.
	\end{proof}
	
	\begin{proof}[Proof of \cref{the:compression}]
	Assume $\property{Q}$ holds.
	We want to design a procedure using a derivation of
	$s \redi t$ to coinductively produce a derivation of $s \redo t$.
	To formalise this, let us add the following rule
	to \cref{def:redo}:
	$\ 	\begin{myprooftree}[center]
		\hypo{ s \redi[\gamma] t }
		\infer[snake]1[comp]{ s \redo t }
		\end{myprooftree} \ $,
	and show that any derivation of $s \redo t$
	can be coinductively turned into a derivation of $s \redo t$
	that does not use this rule \proofrule{comp},
	\ie that the rule is admissible.
	Observe that \proofrule{comp} can occur at most once in
	any branch of a derivation, so that
	its occurrences delimit a compressed prefix
	and subtrees remaining to be compressed;
	we have to prove
	\begin{ienumerate}
	\item that in this situation the compressed prefix can be extended
		wherever the rule \proofrule{comp} occurs
		(which we do by analysing the remaining subtrees),
		so that iterating the process
		results in an infinitary compressed derivation
		(as \proofrule{comp} will be pushed to infinity,
		see \cref{app:example} for an illustration), and
	\item that this infinitary derivation is valid,
		\ie each of its infinite branches crosses infinitely often
		a coinductive premise of a rule \proofrule{lift_r^\omega}.
	\end{ienumerate}
	
	We start from a derivation
%	\[	\begin{myprooftree}
%		\hypo{\vdots}
%		\infer[no rule]1{ s \redi[\gamma] t }
%		\infer[snake]1[comp]{ s \redo t }
%		\end{myprooftree} \]
	whose conclusion $s \redo t$ is obtained by rule \proofrule{comp},
	with hypothesis $s \redi[\gamma] t$.
	The latter can only be obtained
	through the rule \proofrule{split},
	hence there are $s' \in \dtrees$ and $m \in \Nat$ such that
	$s \redsequence{\gamma}{m} s' \redig[\gamma] t$.
	Then by \cref{lem:Q-implies-P}, $\property P_\gamma$ holds,
	hence $s \reds s'' \redigs[\delta] s' \redig[\gamma] t$ 
	for some $s'' \in \dtrees$ and some ordinal $\delta < \gamma$.
	We apply \cref{lem:redig-max-transitivity}
	and obtain $s \reds s'' \redig[\gamma] t$.
	In particular we have the derivation below left
	for some $\drule{r} \in \mathcal{D}$,
	hence we can form the derivation below right:
	\[\begin{array}{l|l}
		\begin{myprooftree}
		\hypo{\vdots}
		\infer[no rule,ordoubleprem]1{ s''_1 \redi[\gamma] t_1 }
		\hypo{\dots}
		\hypo{\vdots}
		\infer[no rule,ordoubleprem]1{ s''_k \redi[\gamma] t_k }
		\infer3[lift_r]{ s'' = \drule r(s_1,\dots,s_{\arity(\drule r)}) 
			\redig[\gamma] \drule r(s'_1,\dots,s'_{\arity(\drule r)}) = t }
		\end{myprooftree}
	&
		\begin{myprooftree}
		\hypo{ s \reds s'' }
		\hypo{\vdots}
		\infer[no rule]1{ s''_1 \redi[\gamma] t_1 }
		\infer[snake,ordoubleprem]1[comp]{ s''_1 \redo t_1 }
		\hypo{\dots}
		\hypo{\vdots}
		\infer[no rule]1{ s''_k \redi[\gamma] t_k }
		\infer[snake,ordoubleprem]1[comp]{ s''_k \redo t_k }
		\infer3[lift_r^\omega]{ s'' \redog t }
		\infer2[split^\omega]{ s \redo t }
		\end{myprooftree}
	\end{array}\]
	and we iterate the process coinductively in each subtree
	rooted at $s''_i \redo t_i$.
	In addition the derivation we produce is correct
	(\ie every of its infinite branches 
	crosses infinitely many coinductive premises)
	because its structure exactly follows the one of the target $t$:
	each \proofrule{split^\omega} followed by \proofrule{lift_r^\omega}
	in the obtained derivation
	corresponds to an \proofrule{r} in the derivation $t$,
	hence the validity of the former is ensured
	by the validity of the latter.
	
	Conversely, if the compression property holds then
	any reduction $s \redi[\delta] t \red t'$
	can be compressed to $s \redo t'$.
	This is equivalent to $s \redi[0] t'$,
	hence $s \redi[\delta] t'$ by \cref{lem:redig-higher-ordinals}.
	Finally, $s \reds s \redi[\delta] t'$.
	\end{proof}

\subsection{Compression for left-linear (coinductive) first-order rewriting}

In this subsection, we fix an \itrs $\mathcal R$ as in \cref{ssec:fo},
and we prove the compression lemma for first-order rewriting:
if $\mathcal R$ is left-linear and left-finite
then it has the compression property,
as proved in \cite{Kennaway.Klo.Sle.Vri.95}
in the topological setting.
Our proof relies on \cref{the:compression} and 
on two lemmas: the first one isolates the finite portion
of an infinite rewriting that is necessary to produce a given finite prefix
of the output (\emph{pattern extraction}),
the second one performs the remaining infinitary rewriting
on the branches starting from this prefix (\emph{pattern filling}).
These lemmas form the base case of a straightforward induction
proving the property $\property{Q}$.
A concrete illustration of the whole compression procedure
is given in \cref{app:example}.

\begin{definition}
	A term $l \in \foterms$ is \defemph{linear}
	if no variable occurs twice (or more) in ~$l$.
	A rewrite rule $(l,r)$ is
	\defemph{left-linear} (resp. \defemph{left-finite}) 
	if $l$ is linear (resp. $l \in \foterms$).
	An \itrs  $\mathcal R$ is 
	\defemph{left-linear} (resp. \defemph{left-finite}) 
	if each rule $(l,r) \in \mathcal R$ is so.
\end{definition}

\begin{lemma}[pattern extraction] \label{lem:fo-pattern-extract}
	Consider an ordinal $\delta$ such that
	$\property{P}_{\delta}$ holds, and
	$s \in \foiterms$, $\sigma : \Var \to \foiterms$
	and a linear $l \in \foterms$
	such that $s \redi[\delta] \sigma \cdot l$.
	Then there is a substitution $\tau : \Var \to \foiterms$ such that
	$s \reds \tau \cdot l$ and
	$\forall x \in \Var,\ \tau(x) \redi[\delta] \sigma(x)$.
\end{lemma}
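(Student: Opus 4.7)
The plan is to prove the lemma by structural induction on the finite linear term $l \in \foterms$. The property $\property{P}_\delta$ is the key enabler: it will let us extract, from any $\redi[\delta]$-derivation ending at a compound pattern, a finite prefix of rewrite steps that exposes the top constructor of the pattern.

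In the base case, $l$ is a variable $x$. Then the hypothesis reads $s \redi[\delta] \sigma(x)$, and it suffices to take $\tau$ defined by $\tau(x) \coloneqq s$ and $\tau(y) \coloneqq \sigma(y)$ for $y \neq x$: the zero-step sequence $s \reds s = \tau \cdot l$ works, and the condition $\tau(z) \redi[\delta] \sigma(z)$ holds for every $z$, using either the hypothesis or \cref{lem:redi-technical} (reflexivity).

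In the inductive case, $l = \focons{c}(l_1, \dots, l_k)$ with each $l_i$ linear and, by linearity of $l$, the variable sets $\Var(l_i)$ pairwise disjoint. The hypothesis becomes $s \redi[\delta] \focons{c}(\sigma \cdot l_1, \dots, \sigma \cdot l_k)$. Since $\property{P}_\delta$ holds, \cref{lem:P-implies-preponement-zero-steps} produces some $s^\sharp$ with $s \reds s^\sharp \redig[\delta] \focons{c}(\sigma \cdot l_1, \dots, \sigma \cdot l_k)$. The $\redig[\delta]$ step can only come from a $\mathrm{lift}_{\mathrm{Cons}_{\focons c}}$ rule, forcing $s^\sharp = \focons{c}(s^\sharp_1, \dots, s^\sharp_k)$ with each $s^\sharp_i \redi[\delta] \sigma \cdot l_i$. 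We then apply the induction hypothesis to each pair $(s^\sharp_i, l_i)$, obtaining substitutions $\tau_i$ such that $s^\sharp_i \reds \tau_i \cdot l_i$ and $\tau_i(x) \redi[\delta] \sigma(x)$ for all $x$.

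To assemble the result, we define $\tau(x) \coloneqq \tau_i(x)$ when $x \in \Var(l_i)$ (unambiguous by linearity) and $\tau(x) \coloneqq \sigma(x)$ otherwise. Then $\tau \cdot l = \focons{c}(\tau_1 \cdot l_1, \dots, \tau_k \cdot l_k)$, and the required reduction is obtained by concatenating $s \reds s^\sharp$ with the parallel reductions $s^\sharp_i \reds \tau_i \cdot l_i$ performed one component at a time (each of which lifts to $s^\sharp$ through the $\focons c$-context via the congruence rule of \cref{def:red}). The pointwise condition on $\tau$ follows immediately from the $\tau_i$ and reflexivity. The main subtle point — and the only place where left-linearity really bites — is this last gluing step: without linearity, distinct $\tau_i$ could disagree on a shared variable and prevent the assembly of a single substitution making $\tau \cdot l$ match the actual descendants of $s^\sharp$.
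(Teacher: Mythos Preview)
Your proof is correct and follows essentially the same approach as the paper: structural induction on $l$, using $\property{P}_\delta$ to expose the top constructor and then gluing the inductive substitutions via linearity. The only cosmetic difference is that you invoke \cref{lem:P-implies-preponement-zero-steps} as a black box, whereas the paper unpacks the \proofrule{split}/\proofrule{lift_c} derivation and applies $\property{P}_\delta$ directly to the $\redsequence{\delta}{n}$ premiss before composing via \cref{lem:redig-max-transitivity}.
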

	
	\begin{proof}
	By induction over $l$.
	If $l = x$, then we define $\tau(x) \coloneqq s$
	and $\tau(y) \coloneqq \sigma(y)$ otherwise.
	If $l = \focons c(l_1,\dots,l_k)$ then there is derivation as follows:
	\[\begin{myprooftree}
		\hypo{ s \redsequence \delta n \focons c(t_1,\dots,t_k) }
		\hypo{ t_1 \redi[\delta] \sigma \cdot l_1 }
		\hypo{ \dots }
		\hypo{ t_k \redi[\delta] \sigma \cdot l_k }
		\infer[double]3[lift_c]{ \focons c(t_1,\dots,t_k) \redig[\delta]
			\focons c(\sigma \cdot l_1,\dots,\sigma \cdot l_k) }
		\infer2[split]{ s \redi[\delta] \sigma \cdot l
			= \focons c(\sigma \cdot l_1,\dots,\sigma \cdot l_k) }
	\end{myprooftree}\]
	By $\property{P}_{\delta}$ applied to the left premise of
	\proofrule{split}, there are an ordinal $\epsilon < \delta$
	and $s_1,\dots,s_k \in \foiterms$ such that
	$s \reds \focons c(s_1,\dots,s_k) \redigs[\epsilon]
	\focons c(t_1,\dots,t_k)$.
	As a consequence, for all $1 \leq i \leq k$ there are reductions
	$s_i \redis[\epsilon] t_i \redi[\delta] \sigma \cdot l_i$,
	therefore by \cref{lem:redig-max-transitivity}
	$s_i \redi[\delta] \sigma \cdot l_i$.
	By induction there exist substitutions $\tau_i$ such that
	$s_i \reds \tau_i \cdot l_i$ and
	$\forall x \in \Var,\ \tau_i(x) \redi[\delta] \sigma(x)$.
	By linearity of $l$, each variable occurring in $l$
	occurs in exactly one of the $l_i$
	hence we can define, for all $x \in \Var$,
	$\tau(x) \coloneqq \tau_i(x)$ if $x$ occurs in some $l_i$
	and $\tau(x) \coloneqq \sigma(x)$ otherwise.
	As a consequence, $\forall x \in \Var,\ \tau(x) \redi[\delta] \sigma(x)$.
	In addition,
	$	s \reds \focons c(s_1,\dots,s_k) \reds
		\focons c(\tau_1 \cdot l_1, \dots, \tau_k \cdot l_k)
		= \tau \cdot l$.
	\end{proof}

\begin{lemma}[pattern filling] \label{lem:fo-pattern-fill}
	For all $r \in \foiterms$ and $\sigma, \tau : \Var \to \foiterms$,
	if $\forall x \in \Var,\ \tau(x) \redi[\delta] \sigma(x)$
	then $\tau \cdot r \redi[\delta] \sigma \cdot r$.
\end{lemma}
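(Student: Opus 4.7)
The plan is to construct a derivation of $\tau \cdot r \redi[\delta] \sigma \cdot r$ by coinduction on the (possibly infinite) term $r$, producing at each stage a derivation whose last rule is \proofrule{split} with empty $\reds$-prefix (that is, $m = 0$) followed by a single $\redig[\delta]$-step at the root. The coinductive procedure distinguishes two cases according to the shape of $r$.

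If $r$ is a variable $x$, then $\tau \cdot r = \tau(x)$ and $\sigma \cdot r = \sigma(x)$, so the hypothesis $\tau(x) \redi[\delta] \sigma(x)$ already provides the required derivation and we reuse it verbatim. If $r = \focons c(r_1, \dots, r_k)$, then $\tau \cdot r = \focons c(\tau \cdot r_1, \dots, \tau \cdot r_k)$ and symmetrically for $\sigma \cdot r$; the coinduction hypothesis supplies derivations of $\tau \cdot r_i \redi[\delta] \sigma \cdot r_i$ for each $i$, which we feed as (coinductive) premisses to the rule \proofrule{lift_c} of \cref{the:M-equiv-C-fo} to obtain $\tau \cdot r \redig[\delta] \sigma \cdot r$. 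We close by applying \proofrule{split} to the trivial sequence $\tau \cdot r \redsequence{\delta}{0} \tau \cdot r$, concluding $\tau \cdot r \redi[\delta] \sigma \cdot r$.

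The main (and rather mild) subtlety is to justify the validity of the resulting derivation: every infinite branch must cross infinitely many coinductive premisses. Along any infinite branch of the constructed object, either it eventually enters one of the inherited hypothesis derivations $\tau(x) \redi[\delta] \sigma(x)$ (which is valid by assumption), or it threads through infinitely many \proofrule{lift_c} instances, all of whose premisses are coinductive in the first-order setting. In both situations the branch crosses infinitely many coinductive rules, so the output is a genuine derivation in the coinductively defined relation $\redi[\delta]$. I do not anticipate any further obstacle, since beyond this coinductive construction the statement is essentially a straightforward congruence-under-substitution property and does not require any appeal to the induction hypothesis $\property P_\delta$ used in \cref{lem:fo-pattern-extract}.
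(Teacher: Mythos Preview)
Your proof is correct and follows essentially the same route as the paper: coinduction on $r$, the variable case by hypothesis, the constructor case by \proofrule{lift_{\focons c}} on the coinductively built premisses. The only cosmetic difference is that where you explicitly close with \proofrule{split} on a trivial $\redsequence{\delta}{0}$ prefix, the paper invokes the inclusion $\mathord{\redig[\delta]} \subset \mathord{\redi[\delta]}$ (\cref{lem:redig-included-in-redi}), which amounts to the same thing; your added remark on productivity of the coinductive construction is a welcome clarification that the paper leaves implicit.
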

	
	\begin{proof}
	By coinduction on $r \in \foiterms$.
	If $r$ is just a variable the result follows by the assumption.
	Otherwise $r = \focons c(s_1,\dots,s_{\arity(\focons c)})$.
	For all $1 \leq i \leq \arity(\focons c)$
	we build $\tau \cdot s_i \redi[\delta] \sigma \cdot s_i$ coinductively.
	By applying the rule \proofrule{lift_{\focons c}}
	we obtain $\tau \cdot r \redig[\delta] \sigma \cdot r$
	and we conclude by \cref{lem:redig-included-in-redi}.
	\end{proof}

\begin{theorem} \label{the:fo-Q}
	If $\mathcal R$ is left-linear and left-finite then
	$\mathord{\red}$ satisfies the property $\property Q$,
	and thus has the compression property.
\end{theorem}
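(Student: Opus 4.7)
The plan is to prove $\property Q$ by induction on the derivation of the single step $t \red t'$ (equivalently, on the depth at which it occurs). So fix an ordinal $\delta$ for which $\property P_\delta$ holds, together with terms $s, t, t' \in \foiterms$ satisfying $s \redi[\delta] t \red t'$, and produce the required intermediate term $s'$ with $s \reds s' \redi[\delta] t'$.

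For the base case, the step $t \red t'$ is at the root: we have a rule $(l,r) \in \mathcal R$ and a substitution $\sigma$ with $t = \sigma \cdot l$ and $t' = \sigma \cdot r$. Since $\mathcal R$ is left-linear, $l$ is linear, so \cref{lem:fo-pattern-extract} applied to $s \redi[\delta] \sigma \cdot l$ yields a substitution $\tau$ with $s \reds \tau \cdot l$ and $\tau(x) \redi[\delta] \sigma(x)$ for every $x \in \Var$. Then $\tau \cdot l \red \tau \cdot r$ by the same rewrite rule, and \cref{lem:fo-pattern-fill} applied to $r$ gives $\tau \cdot r \redi[\delta] \sigma \cdot r = t'$. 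Taking $s' \coloneqq \tau \cdot r$ finishes this case.

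For the inductive case, the step $t \red t'$ is at positive depth: $t = \focons c(t_1,\dots,t_k)$ and $t' = \focons c(t_1,\dots,t'_i,\dots,t_k)$ with $t_i \red t'_i$. From $s \redi[\delta] t$ derived by \proofrule{split}, we have $s \redsequence{\delta}{n} t''$ and $t'' \redig[\delta] t$. Since $t$ has constructor $\focons c$ at the root, the rule concluding $t'' \redig[\delta] t$ must be \proofrule{lift_c}, so $t'' = \focons c(u_1,\dots,u_k)$ with $u_j \redi[\delta] t_j$ for every $j$. Applying $\property P_\delta$ to $s \redsequence{\delta}{n} t''$ yields $\epsilon < \delta$ and a term $\focons c(s_1,\dots,s_k)$ with $s \reds \focons c(s_1,\dots,s_k) \redigs[\epsilon] \focons c(u_1,\dots,u_k)$; peeling off the outer \proofrule{lift_c} shows $s_j \redis[\epsilon] u_j$, hence $s_j \redi[\delta] t_j$ by \cref{lem:redig-max-transitivity,lem:redi-technical}. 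In particular $s_i \redi[\delta] t_i \red t'_i$, so the induction hypothesis provides $s'_i$ with $s_i \reds s'_i \redi[\delta] t'_i$. The desired witness is then $s' \coloneqq \focons c(s_1,\dots,s'_i,\dots,s_k)$: we have $s \reds \focons c(s_1,\dots,s_k) \reds s'$, and reapplying \proofrule{lift_c} (with $s_j \redi[\delta] t_j$ in the $j \neq i$ premisses and $s'_i \redi[\delta] t'_i$ in the $i$th) followed by a trivial \proofrule{split} gives $s' \redi[\delta] t'$.

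The routine part is the base case, which is essentially a direct composition of the two preparatory lemmas. The delicate step is the inductive case, where the main obstacle is extracting a prefix of $s$ matching the shape $\focons c(\dots)$ while respecting the ordinal bookkeeping: one must invoke $\property P_\delta$ to expose the constructor before descending into the child, and then recombine the residual $\redis[\epsilon]$ with the subsequent $\redi[\delta]$ using the max-transitivity lemmas. Once this is set up, the inductive call and the reassembly by \proofrule{lift_c} go through without further difficulty.
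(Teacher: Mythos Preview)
Your proof is correct and follows essentially the same induction on $t \red t'$ as the paper, with the base case handled by \cref{lem:fo-pattern-extract,lem:fo-pattern-fill} and the inductive case by peeling off the head constructor and recursing. The only noteworthy difference is in the inductive case: the paper packages the step ``expose the head constructor of $s$'' as a single invocation of \cref{lem:P-implies-preponement-zero-steps}, which directly yields $s \reds \focons c(s_1,\dots,s_k) \redig[\delta] t$, whereas you unfold the \proofrule{split} rule by hand, apply $\property P_\delta$ to the $\redsequence{\delta}{n}$ prefix, and then recombine with the residual $\redig[\delta]$ using the max-transitivity lemmas --- effectively reproving that lemma inline. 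Both routes arrive at the same $s_j \redi[\delta] t_j$, and the rest is identical.
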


%	\begin{proof}
%	Take an ordinal $\delta$ and terms $s,t,t' \in \foiterms$
%	such that $\property{P}_{\delta}$
%	and $s \redi[\delta] t \red t'$.
%	We want to build a term $s' \in \foiterms$
%	and reductions $s \reds s' \redi[\delta] t'$.
%	By induction over $t \red t'$:
%	\begin{itemize}
%	\item If there are a rule $(l,r) \in \mathcal R$
%		and a substitution $\sigma : \Var \to \foiterms$
%		such that $t = \sigma \cdot l$ and $t' = \sigma \cdot r$,
%		then the result follows from 
%		\cref{lem:fo-pattern-extract,lem:fo-pattern-fill}.
%	\item Otherwise we can write $t = \focons c(t_1,\dots,t_k)$
%		and $t' = \focons c(t_1,\dots,t'_i,\dots,t_k)$
%		with $t_i \red t'_i$.
%		By \cref{lem:P-implies-preponement-root-steps}
%		applied to $s \redi[\delta] t$,
%		there are $s_1,\dots,s_k \in \foiterms$ such that
%		$s \reds \focons c(s_1,\dots,s_k)
%		\redig[\delta] \focons c(t_1,\dots,t_k)$.
%		By the rule \proofrule{lift_c} this means that
%		for all $j \in [1,k]$, $s_j \redi[\delta] t_j$.
%		In particular we obtain $s_i \redi[\delta] t_i \red t'_i$,
%		to which we can apply the induction hypothesis:
%		there is an $s'_i \in \foiterms$ such that
%		$s_i \reds s'_i \redi[\delta] t'_i$.
%		Finally, $s \reds \focons c(s_1,\dots,s_k)
%		\reds \focons c(s_1,\dots,s'_i,\dots,s_k)
%		\redi[\delta] \focons c(t_1,\dots,t'_i,\dots,t_k) = t'$.
%	\qedhere
%	\end{itemize}
%	\end{proof}

	\begin{proof}[Proof sketch]
	Take an ordinal $\delta$ and terms $s,t,t' \in \foiterms$
	such that $\property{P}_{\delta}$
	and $s \redi[\delta] t \red t'$.
	We want to build a term $s' \in \foiterms$
	and reductions $s \reds s' \redi[\delta] t'$.
	By induction over $t \red t'$,
	\begin{ienumerate}
	\item if there are a rule $(l,r) \in \mathcal R$
		and a substitution $\sigma : \Var \to \foiterms$
		such that $t = \sigma \cdot l$ and $t' = \sigma \cdot r$,
		then we apply
		\cref{lem:fo-pattern-extract,lem:fo-pattern-fill}.
	\item otherwise we proceed by induction.
	\end{ienumerate}
	\end{proof}

%\begin{corollary}[compression] \label{the:compression-fo}
%	If $\mathcal R$ is left-linear and left-finite then
%	$\mathord{\red}$ has the compression property.
%\end{corollary}

% !TeX spellcheck = en_GB

%\section{Compressing \texorpdfstring{$\redi$}{→∞}
%into \texorpdfstring{$\redo$}{→ω}: An example}
%\label[appendix]{app:example}

\subsection{An example in first-order rewriting}
\label{app:example}

\NewDocumentCommand \infapp {m} { {#1}^{\omega} }

\colorlet{1}{green}
\colorlet{2}{red}
\colorlet{21}{orange}
\colorlet{22}{magenta!70}
\colorlet{3}{blue}
\colorlet{30}{cyan}

In \cref{eq:intro-example-1,eq:intro-example-2} in the introduction,
we gave the example of the \itrs defined on the signature
$\Sigma \coloneqq \set{a,f,g}$
with $\arity(a) \coloneqq 0$, $\arity(f) \coloneqq 1$ 
and $\arity(g) \coloneqq 1$,
by the following rules:
$a \red_1 f(g(a))$ and $g(f(x)) \red_2 f(x)$;
we presented a strongly converging rewriting sequence
$a \redi f(f(f(\dots)))$ of length $\omega \cdot 2$
as well as an equivalent sequence of length $\omega$.
Let us translate the initial sequence into a derivation
of $a \redi f(f(f(\dots)))$
in the coinductive formalism from \cref{def:redi-fo},
and describe how the proofs of \cref{the:compression,the:fo-Q}
compress it in a derivation of $a \redo f(f(f(\dots)))$.

We use the following notations:
$\infapp f \coloneqq f(f(f(\dots)))$,
$\infapp{fg} \coloneqq f(g(f(g(\dots))))$, and
$\infapp{gf} \coloneqq g(f(g(f(\dots))))$.
The initial sequence $a \redi \infapp f$ of length $\omega \cdot 2$,
namely \cref{eq:intro-example-1},
corresponds to the following derivation:
\begin{equation} \label{app:example:eq:1}
	\begin{myprooftree}[center]
	\hypo{ \color{1} a \red_1 f(g(a)) \hspace{-3em} }
	\hypo{ a \red_1 f(g(a)) }
	\hypo{\parbox[t][0pt]{0pt}{
		\begin{tikzpicture}[baseline,color=2]
		\draw [->,dashed] (0,0)
		to[out=90,in=180] (0.9,0.4)
		to[out=0,in=90] (1.5,-1.5)
		to[out=270,in=0] (0,-3.2);
		\end{tikzpicture}
	}}
	\infer[no rule]1{ \color{2} f(g(a)) \redig[0] \infapp{fg} }
	\infer2{ \color{2} a \redi[0] \infapp{fg} }
	\infer[double]1{ \color{2} g(a) \redig[0] \infapp{gf} }
	\rewrite{\color{2}\box\treebox }
	\infer1{ g(a) \redi[0] \infapp{gf} }
	\infer[double]1{ f(g(a)) \redig[0] \infapp{fg} }
	\hypo{ \color{3} \infapp{gf} \red_2 \infapp{fg} }
	\hypo{\parbox[t][0pt]{0pt}{
		\begin{tikzpicture}[baseline,color=30]
		\draw [->,dashed] (0,0)
		to[out=90,in=180] (0.7,0.4)
		to[out=0,in=90] (1.3,-1.3)
		to[out=270,in=0] (0,-1.8);
		\end{tikzpicture}
	}}
	\infer[no rule]1{ \color{30} \infapp{fg} \redig[1] \infapp f }
	\rewrite{\color{30}\box\treebox }
	\infer2{ \infapp{gf} \redi[1] \infapp f }
	\infer[double]1{ \infapp{fg} \redig[1] \infapp f }
	\infer[snake]3{ a \redo \infapp{f} }
	\end{myprooftree}
\end{equation}
We omit the label of the rules but they should be clear:
simple rules are occurrences of \proofrule{split}
or \proofrule{split^\omega},
double rules are occurrences of \proofrule{lift_{\focons c}}
or \proofrule{lift_{\focons c}^\omega}
for $\focons c \in \set{f,g}$,
and the wavy rule is the rule \proofrule{comp} from the proof of
\cref{the:compression}, delimiting the compressed prefix
and branches yet to be compressed.

Next, we apply transitivity (\cref{lem:redig-max-transitivity}).
The {\color{2} red}, {\color{3} blue} and {\color{30} cyan}
subderivations are reused without modification,
and the compressed prefix grows:
\begin{equation} \label{app:example:eq:2}
	\begin{myprooftree}[center]
	\hypo{ \color{1} a \red_1 f(g(a)) \hspace{-6em} }
	\hypo{ \color{21} a \red_1 f(g(a)) }
	\hypo{\parbox[t][0pt]{0pt}{
		\begin{tikzpicture}[baseline,color=22]
		\draw [->,dashed] (0,0)
		to[out=90,in=180] (0.8,0.4)
		to[out=0,in=90] (1.6,-2)
		to[out=270,in=0] (0,-3.2);
		\end{tikzpicture}
	}}
	\infer[no rule]1{ \color{22} g(a) \redig[0] \infapp{gf} }
	\infer1{ \color{22} g(a) \redi[0] \infapp{gf} }
	\infer[double]1{ \color{22} f(g(a)) \redig[0] \infapp{fg} }
	\rewrite{\color{22}\box\treebox}
	\infer2{ \color{2} a \redi[0] \infapp{fg} }
	\infer[double]1{ \color{2} g(a) \redig[0] \infapp{gf} }
	\rewrite{\color{2}\box\treebox }
	\hypo{ \color{3} \infapp{gf} \red_2 \infapp{fg} }
	\hypo{ \vdots }
	\infer[double]1{ \color{30} \infapp{fg} \redig[1] \infapp f }
	\rewrite{\color{30}\box\treebox }
	\infer[snake]3{ g(a) \redo \infapp f }
	\infer[double]1{ f(g(a)) \redog \infapp f }
	\infer2{ a \redo \infapp{f} }
	\end{myprooftree}
\end{equation}

Now we need to permute $\color{2} g(a) \redig[0] \infapp{gf}$
and $\color{3} \infapp{gf} \red_2 \infapp{fg}$.
To do so, we do \emph{pattern extraction} (\cref{lem:fo-pattern-extract}):
we start with the necessary steps of the former
(in fact only the first step $\color{21} a \red_1 f(g(a))$,
performed in the context $g(-)$)
so that the latter can be performed.
Then we do \emph{pattern filling} (\cref{lem:fo-pattern-fill}),
performing {\color{22} the remainder} of the former rewriting.
Concretely:
\begin{equation} \label{app:example:eq:3}
	\begin{myprooftree}[center]
	\hypo{ \color{1} a \red_1 f(g(a)) \hspace{-6em} }
	\hypo{ g({\color{21}a}) \mathrel{\color{21}\red_1}
		g({\color{21} f(g(a)) }) \mathrel{\color{3}\red_2}
		{\color{3} f(g(a)) } }
	\hypo{\vdots}
	\infer[double]1{ \color{22} f(g(a)) \redig[0] \infapp{fg} }
	\rewrite{\color{22}\box\treebox}
	\hypo{\vdots}
	\infer[double]1{ \color{30} \infapp{fg} \redig[1] \infapp f }
	\rewrite{\color{30}\box\treebox }
	\infer[snake]3{ g(a) \redo \infapp f }
	\infer[double]1{ f(g(a)) \redog \infapp f }
	\infer2{ a \redo \infapp{f} }
	\end{myprooftree}
\end{equation}

Finally, observe that the {\color{22} pink} and {\color{30} cyan}
subderivations are exactly the ones we merged between
\cref{app:example:eq:1,app:example:eq:2},
hence we can apply again the transitivity \cref{lem:redig-max-transitivity}:
\begin{equation} \label{app:example:eq:4}
	\begin{myprooftree}[center]
	\hypo{ a \red_1 f(g(a)) \hspace{-5em} }
	\hypo{ g(a) \red_{1,2}^* f(g(a)) \hspace{-5em} }
	\hypo{\vdots}
	\infer[double]1{ \color{2} g(a) \redig[0] \infapp{gf} }
	\rewrite{\color{2}\box\treebox }
	\hypo{ \color{3} \infapp{gf} \red_2 \infapp{fg} }
	\hypo{ \vdots }
	\infer[double]1{ \color{30} \infapp{fg} \redig[1] \infapp f }
	\rewrite{\color{30}\box\treebox }
	\infer[snake]3{ g(a) \redo \infapp f }
	\infer[double]1{ f(g(a)) \redog \infapp f }
	\infer2{ g(a) \redo \infapp f }
	\infer[double]1{ f(g(a)) \redog \infapp f }
	\infer2{ a \redo \infapp{f} }
	\end{myprooftree}
\end{equation}
We obtain again a wider compressed prefix.
We continue coinductively above the wavy line,
as in \cref{app:example:eq:2}.

\subsection{Compression for (coinductive) infinitary λ-calculi}
\label{ssec:compression-lambda}

The very same work can be done for the $abc$-infinitary λ-calculi
introduced in \cref{ssec:lambda}:

\begin{theorem} \label{the:lam-Q}
	The relation $\mathord{\red}$ on $\liiiterms abc$
	satisfies the property $\property Q$,
	and thus has the compression property.
\end{theorem}

The proof is detailed in \shortlongversion
{the appendices of the long version of this article \cite{Cerda.Sau.26a}}
{\cref{app:compression-lambda}}.
A noticeable consequence of the compression property in this setting
is that our coinductive presentation of $\redi$
(as in \cref{the:M-equiv-C-lambda})
is equivalent to its usual one in the literature,
where only $\redo$ is defined
\cite{Endrullis.Pol.13,Cerda.24}.

\section{Compressing \texorpdfstring{$\muMALLi$}{μMALL∞} cut-elimination 
sequences}
\label{sec:compressing-mumall}

In this last section, we prove compression for an example of
the other kind of infinitary rewriting appearing in the literature:
infinitary cut-elimination in non-wellfounded proof systems.
We focus on the multiplicative-additive linear logic
with fixed points $\muMALL$
and the non-wellfounded proof system $\muMALLi$ for this logic,
as a compression lemma for infinitary cut-elimination in this system
can crucially be used to prove cut-elimination
for a whole range of non-wellfounded proof systems for
(intuitionistic, classical, linear) logics with fixed points
\cite[Prop.~26]{Saurin.23}.
%This is made possible by the fact that multiplicative-additive linear logic
%is the \enquote{fully linear} fragment of linear logic
%(the latter being, more precisely, \enquote{multilinear}):
%the absence of any contraction rule plays the same role as
%left-linearity in first-order rewriting.

\subsection{The non-wellfounded proof system 
\texorpdfstring{$\muMALLi$}{μMALL∞}}
\label{ssec:muMALLi}

We recall the definitions of the formulæ of $\muMALL$
and the rules of $\muMALLi$,
from which we build a set $\dtrees[\muMALLi]$ of
non-wellfounded many-sorted terms.
These trees are usually called \emph{pre-proofs} in the literature,
where \defemph{proofs} are then the subset of pre-proofs
satisfying some validity criterion.
However compression and validity are quite orthogonal concerns, as
\begin{ienumerate}
\item the following development can be performed entirely
	at the more general level of pre-proofs,
\item the validity of the limit of a compressed cut-elimination sequence 
	is not ensured by the compression lemma,
	but by the fact that the original transfinite sequence
	has a limit that is a valid derivation;
\end{ienumerate}
hence there is no need
to discuss or even define validity in this work.

We closely follow the exposition from \cite{Saurin.23a},
making some technicalities more precise
and showing how this material
can be encoded as an instance of the general framework
we introduced in \cref{sec:infinintary-rewriting}.

%\begin{definition}
	Fix a set $\mathcal A$ of \defemph{atomic formulæ} and
	a countable set $\mathcal X$ of \defemph{fixed point variables}.
	The set $\mathcal F_0$ of $\muMALL$ \defemph{pre-formulæ}
	is defined by induction by:
	\begin{align*}
		\mathcal F_0 \quad \ni \quad F,G,\dots \quad \coloneqq \quad  &
		\hspace{\widthof{|\ }} A \ |\ \llneg A 
			\ |\ 0 \ |\ 1 \ |\ \top \ |\ \bot \ |\  F \llpar G \ |\ 
			F \lltens G \ |\ F \llplus G \ |\ F \llwith G \\&
		|\ X \ |\ \mu X.F \ |\ \nu X.F
			& \quad\mathllap{(A \in \mathcal A, X \in \mathcal X).}
	\end{align*}
	The set $\mathcal F$ of $\muMALL$ \defemph{formulæ}
	is the set of all pre-formulæ containing no free fixed point variable
	(\ie each $X \in \mathcal X$ only occurs in the scope
	of fixed point constructors $\mu X$ and $\nu X$).
	The set $\mathcal S$ of $\muMALL$ \defemph{sequents}
	is the set of all finite lists of formulæ,
	denoted by $\vdash F_1,\dots,F_n$.
	As usual, portions of sequents are denoted by $\Gamma$, $\Delta$, etc.
%\end{definition}
%
%\begin{notation}
	Negation is the involution $\llneg{(-)} : \mathcal F \to \mathcal F$ 
	inductively defined by:
	$\llneg{(\llneg A)} \coloneqq A$,
	$\llneg 0 \coloneqq \top$, 
	$\llneg 1 \coloneqq \bot$,
	$\llneg{(F \llpar G)} \coloneqq \llneg F \lltens \llneg G$,
	$\llneg{(F \llplus G)} \coloneqq \llneg F \llwith \llneg G$ and
	$\llneg{(\mu X.F)} \coloneqq \nu X.\llneg F$.
%\end{notation}

\begin{figure}
%\resizebox{\textwidth}{!}{%
%$\begin{array}{cccc}
	\begin{myprooftree}
	\infer[double]0[ax_{\mathnormal F}]{ \vdash F, \llneg F }
	\end{myprooftree}
\qquad
	\begin{myprooftree}
	\hypo{ \vdash \Gamma, F }
	\hypo{ \vdash \Delta, \llneg F }
	\infer[double]2[cut]{ \vdash \Gamma, \Delta }
	\end{myprooftree}
\qquad
	\begin{myprooftree}
	\hypo{ \vdash F_{1,1}, \dots, F_{1,n_1} }
	\hypo{ \dots }
	\hypo{ \vdash F_{k,1}, \dots, F_{k,n_k} }
	\infer[double]3[mcut_{\mathnormal{k,\vec n,\mathord{\cutrel}}}]
		{ \vdash \Gamma }
	\end{myprooftree}
\\[\topsep]
	\begin{myprooftree}
	\hypo{ \vdash F_{\sigma(1)}, \dots, F_{\sigma(n)} }
	\infer[double]1[x_{\mathnormal \sigma}]{ \vdash F_1,\dots,F_n }
	\end{myprooftree}
\qquad
	\begin{myprooftree}
	\infer[double]0[1]{ \vdash 1 }
	\end{myprooftree}
\qquad
	\begin{myprooftree}
	\infer[double]0[\top_{\!\mathnormal \Gamma}]{ \vdash \Gamma, \top }
	\end{myprooftree}
\qquad
	\begin{myprooftree}
	\hypo{ \vdash \Gamma }
	\infer[double]1[\bot]{ \vdash \Gamma, \bot }
	\end{myprooftree}
\qquad
	\begin{myprooftree}
	\hypo{ \vdash \Gamma, F, G }
	\infer[double]1[\llpar]{ \vdash \Gamma, F \llpar G }
	\end{myprooftree}
\qquad
	\begin{myprooftree}
	\hypo{ \vdash \Gamma, F }
	\hypo{ \vdash \Delta, G }
	\infer[double]2[\lltens]{ \vdash \Gamma, \Delta, F \lltens G }
	\end{myprooftree}
\\[\topsep]
	\begin{myprooftree}
	\hypo{ \vdash \Gamma, F_i }
	\infer[double]1[\llplus_{\mathnormal{i, F_{1-i}}}]
		{ \vdash \Gamma, F_0 \llplus F_1 }
	\end{myprooftree}
\qquad
	\begin{myprooftree}
	\hypo{ \vdash \Gamma, F }
	\hypo{ \vdash \Gamma, G }
	\infer[double]2[\llwith]{ \vdash \Gamma, F \llwith G }
	\end{myprooftree}
\qquad
	\begin{myprooftree}
	\hypo{ \vdash \Gamma, F[\mu X.F/X] }
	\infer[double]1[\mu]{ \vdash \Gamma, \mu X.F }
	\end{myprooftree}
\qquad
	\begin{myprooftree}
	\hypo{ \vdash \Gamma, F[\nu X.F/X] }
	\infer[double]1[\nu]{ \vdash \Gamma, \nu X.F }
	\end{myprooftree}
%\end{array}$%
%}
\caption{The derivation rules of $\muMALLi$.
	$F[G/X]$ denotes the formula obtained by substituting the fixed point
	variable $X$ with $G$ in $F$, in a capture-avoiding manner.
%	Notice that there is no rule for the constructor~$0$.%
	}
\label{fig:muMALLi}
\end{figure}

We present the derivation rules of the system $\muMALLi$
in \cref{fig:muMALLi}.
Notice that 
\begin{ienumerate}
\item instead of the usual binary exchange rule,
we define a family of rules \proofrule{x_{\mathnormal \sigma}}
parametrised by a permutation $\sigma : [1,n] \to [1,n]$,
and acting on sequents of length $n$;
\item for the rules \proofrule{ax}, \proofrule{\top} and \proofrule{\llplus}
to be functional, we also need to present them as families of rules
parametrised respectively by $F \in \mathcal F$, $\Gamma \in \mathcal S$,
and $(i, F_{i-1}) \in \Bool \times \mathcal F$;
\item the system features an $n$-ary cut rule,
	the \defemph{multicut rule} \cite{Fortier.San.13,Baelde.Dou.Sau.16}.
	Indeed, cut-elimination theorems for $\muMALLi$ rely on
	moving finitely many consecutive cuts upwards
	(and not of single cuts): a multicut represents such a tree prefix.
	The multicut rule is parametrised by
	$k \in \Nat$, $\vec n \coloneqq (n_1, \dots, n_k) \in \Nat^k$
	and a relation $\mathord{\cutrel} \subset (\Nat^2)^2$
	(a symbol that is reminiscent of cut links in proof nets)
	relating dual formulæ $F$ and $\llneg F$
	that are cut against each other,
	under some conditions on $\cutrel$ ensuring that this is done sensibly,
	\eg no formula is cut against two different formulæ;
	the conclusion of the rule contains all the formulæ that
	have not been cut.
	A formal definition is given in \shortlongversion
	{the appendices of the long version of this article \cite{Cerda.Sau.26a}}
	{\cref{app:multicut}}.
\end{ienumerate}

%\begin{definition}
	$\muMALLi$ \defemph{pre-proofs} are 
	the elements of $\dtrees[\muMALLi]$,
	the many-sorted terms obtained by the construction rules
	from \cref{fig:muMALLi}%
%	together with the multicut rules%
	.
%\end{definition}

\subsection{Compression of infinitary cut-elimination}

As explained in the introduction of this article,
cut-elimination theorems for non-wellfounded proof systems
rely on a rewriting relation \enquote{moving the cuts upwards}
so that the corresponding infinitary rewriting sequences
produce cut-free derivations.
For $\muMALLi$ this rewriting relation is defined by three kinds
of root rewriting steps:
\begin{enumerate}
\item steps handling technicalities related to multicuts, 
	\eg merging a cut into a multicut,
\item \defemph{principal steps}, corresponding to the situation where
	dual formulæ are (multi)cut against each other, for example:
	\[
	\begin{myprooftree}
		\hypo{\moresequents}
		\hypo{\vdash \Gamma, F_0}
		\hypo{\vdash \Gamma,F_1}
		\infer[double]2[\llwith]{\vdash \Gamma,F_0\llwith F_1}
		\hypo{\vdash \Delta,F_i^\bot}
		\infer[double]1[\llplus_{\mathnormal{i, F_{i-1}}}]
			{\vdash \Delta, \llneg{F_0} \llplus \llneg{F_1} } 
		\infer[double]3[mcut_{\mathnormal{k+2, \vec n, \cutrel}}]
			{\vdash \conclusion}
		\end{myprooftree}
	\widered
		\begin{myprooftree}
		\hypo{\moresequents}
		\hypo{\vdash \Gamma,F_i}
		\hypo{\vdash \Delta,\llneg{F_i}}
		\infer[double]3[mcut_{\mathnormal{k+2, \vec n, \cutrel}}]
			{\vdash \conclusion}
		\end{myprooftree}
	\]
\item \defemph{commutative steps}, corresponding to the situation where
	a multicut is permuted with the last rule of one of its premises,
	for example:
	\[
	\begin{myprooftree}
		\hypo{\moresequents}
		\hypo{\vdash \Gamma,F,G}
		\infer[double]1[\llpar]{\vdash \Gamma, F\llpar G} 
		\infer[double]2[mcut_{\mathnormal{k+1,\vec n,\cutrel}}]
			{\vdash \conclusion, F\llpar G}
		\end{myprooftree}
	\widered
		\begin{myprooftree}
		\hypo{\moresequents}
		\hypo{\vdash \Gamma,F,G}
		\infer[double]2[mcut_{\mathnormal{k+1,\vec n',\cutrel}} 
		]{\vdash \conclusion, F,G}
		\infer[double]1[\llpar]{\vdash \conclusion, F\llpar G}
		\end{myprooftree}
	\]
\end{enumerate}
Because of space constraints, 
the description of all root steps
is given in \shortlongversion
{the appendices of the long version of this article \cite{Cerda.Sau.26a}}
{\cref{app:elimination}}.
%Observe that all steps are left-linear in the following sense:
%no sub-derivation is duplicated by a cut-elimination step.

\begin{definition}
	\defemph{Cut-elimination} is the relation $\red$ on $\dtrees[\muMALLi]$
	generated by the root steps given in \shortlongversion
	{the appendices of the long version of this article \cite{Cerda.Sau.26a}}
	{\cref{app:elimination}},
	using the construction from \cref{def:red}.
\end{definition}

In the remainder of this section, we prove that compression holds
for $\muMALLi$ cut-elimination.
Although the definitions for this case of infinitary rewriting
are way heavier than for first-order rewriting or the λ-calculus,
the proof of compression happens to be simpler than in those two frameworks thanks to the following observation:
all root steps defining $\red$ rewrite only \emph{finite (linear) patterns}
into \emph{finite patterns}: it is obvious, for instance,
for the few examples given above.
The proof follows the same steps as the two previous ones.

\begin{definition}
	Given a family $\mathcal D$ of construction rules
	as in \cref{def:dtree}, \defemph{derivation patterns}
	are defined as follows:
	\begin{ienumerate}
	\item for all $\drule r \in \mathcal D$ of arity $k$, 
		$\drule r(\hole_1,\dots,\hole_k)$
		is a derivation pattern,
	\item for all $\drule r \in \mathcal D$ of arity $k$,
		for all derivation patterns $p_i(\hole_{i,1},\dots,\hole_{i,l_i})$
		for $1 \leq i \leq k$,
		$\drule r( p_1(\hole_{1,1},\dots,\hole_{1,l_1}), \dots,
		p_k(\hole_{k,1},\dots,\hole_{k,l_k}) )$ is a derivation pattern.
	\end{ienumerate}
	We denote by $p(s_1,\dots,s_k)$ the many-sorted term obtained by 
	substituting a term $s_i$ to each symbol $\hole_i$ 
	in the derivation pattern $p(\hole_1,\dots,\hole_k)$.
\end{definition}

\begin{lemma}[pattern extraction] \label{lem:muMALL-pattern-extract}
	Consider an ordinal $\delta$ such that $\property{P}_\delta$ holds,
	a derivation pattern $p(\hole_1, \dots, \hole_k)$ and
	$s,t_1,\dots,t_k \in \dtrees[\muMALLi]$ such that
	$s \redi[\delta] p(t_1,\dots,t_k)$.
	Then there exist $s'_1,\dots,s'_k \in \dtrees[\muMALLi]$ such that
	$s \reds p(s'_1,\dots,s'_k)$ and 
	for all $i \in [1,k]$, $s'_i \redi[\delta] t_i$.
\end{lemma}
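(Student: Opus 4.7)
The plan is to proceed by induction on the derivation pattern $p$, mirroring the proofs of \cref{lem:fo-pattern-extract} and \cref{lem:lam-pattern-extract}. In both clauses of the definition of derivation pattern the root has the form $p = \drule r(q_1, \ldots, q_m)$ with $m = \arity(\drule r)$ (each $q_i$ being a hole in the base case, and a smaller pattern in the inductive case), so a single uniform argument will handle both cases.

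Given $s \redi[\delta] p(\vec t)$, the last rule applied must be \proofrule{split}, producing $u \in \dtrees[\muMALLi]$ and $n \in \Nat$ with $s \redsequence{\delta}{n} u$ and $u \redig[\delta] p(\vec t)$. The latter must end with \proofrule{lift_r}, so $u = \drule r(u_1, \ldots, u_m)$ with $u_i \redi[\delta] q_i(\vec t_i)$ (writing $\vec t_i$ for the segment of $\vec t$ sitting below $q_i$). Applying $\property P_\delta$ to $s \redsequence{\delta}{n} \drule r(u_1, \ldots, u_m)$ yields $s'' \in \dtrees[\muMALLi]$ and $\epsilon < \delta$ with $s \reds s'' \redigs[\epsilon] \drule r(u_1, \ldots, u_m)$; since $\redigs[\epsilon]$ is built from \proofrule{lift_r} steps and therefore preserves the root rule, one obtains $s'' = \drule r(s_1, \ldots, s_m)$ with $s_i \redis[\epsilon] u_i$, whence by transitivity and monotonicity (\cref{lem:redi-technical}) $s_i \redi[\delta] q_i(\vec t_i)$ for each $i$. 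In the base case ($q_i = \hole$) one takes $s'_i \coloneqq s_i$, using reflexivity of $\redi[\delta]$; in the inductive case the induction hypothesis applied to $s_i \redi[\delta] q_i(\vec t_i)$ produces a subtuple $\vec{s'}_i$ with $s_i \reds q_i(\vec{s'}_i)$ and each component $\redi[\delta]$ the corresponding $t$. Concatenating the $\reds$ segments yields $s \reds p(\vec{s'})$ as required.

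The only subtlety worth flagging is the uniform handling of all $\muMALLi$ rules, in particular the family of multicut rules from \cref{def:multicut} indexed by $k$, $\vec n$ and a cut graph $\cutrel$; but since each instance is simply one rule in the framework of \cref{def:dtree} and the whole argument operates purely at the abstract level of the derivation-tree structure, no case analysis on the rule shape is needed. Unlike in \cref{lem:lam-pattern-extract} there is no substitution to propagate, so this pattern extraction is really the first-order argument transposed to patterns of arbitrary finite depth; I expect no substantial obstacle beyond carefully indexing the $\vec t_i$ and $\vec{s'}_i$ sub-tuples in the induction step.
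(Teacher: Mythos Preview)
Your argument is correct and follows essentially the same approach as the paper's: a straightforward induction on the pattern $p$, using $\property P_\delta$ (the paper phrases this via \cref{lem:P-implies-preponement-zero-steps}) to expose the root rule, then recursing on the sub-patterns. The only trivial slip is that the reflexivity you invoke in the base case is that of $\reds$ (to obtain $s_i \reds s_i = q_i(s'_i)$), not of $\redi[\delta]$, since $s_i \redi[\delta] t_i$ is already established.
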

	
	\begin{proof}
	The proof is identical as the one for
	\cref{lem:fo-pattern-extract}:
	we proceed by a straightforward induction over $p$,
	using $\property{P}_\delta$ and
	\cref{lem:P-implies-preponement-root-steps}.
	\end{proof}

\begin{lemma}[pattern filling] \label{lem:muMALL-pattern-fill}
	For all derivation pattern $q(\hole_1, \dots, \hole_k)$ and
	for all $s'_1,\dots,s'_k$, $t_1,\dots,t_k \in \dtrees[\muMALLi]$,
	if for all $i \in [1,k]$, $s'_i \redi[\delta] t_i$
	then $q(s'_1,\dots,s'_k) \redi[\delta] q(t_1,\dots,t_k)$.
\end{lemma}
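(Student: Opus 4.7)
The plan is to proceed by a straightforward induction on the derivation pattern $q$, which is crucially finite since derivation patterns are defined \emph{inductively} in the preceding definition (unlike the infinitary terms of \cref{lem:fo-pattern-fill} or \cref{lem:lam-pattern-fill}, which had to be handled coinductively).

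In the base case, $q(\hole_1, \dots, \hole_k) = \drule r(\hole_1, \dots, \hole_k)$ for some $\drule r \in \mathcal D$ of arity $k$. By hypothesis, $s'_i \redi[\delta] t_i$ for all $i \in [1,k]$. These are precisely the premisses of the rule \proofrule{lift_r} from \cref{the:M-equiv-C}, whose application yields $\drule r(s'_1,\dots,s'_k) \redig[\delta] \drule r(t_1,\dots,t_k)$, that is $q(s'_1,\dots,s'_k) \redig[\delta] q(t_1,\dots,t_k)$. We then conclude by \cref{lem:redig-included-in-redi} that $q(s'_1,\dots,s'_k) \redi[\delta] q(t_1,\dots,t_k)$.

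In the inductive case, $q$ is of the form $\drule r(p_1(\hole_{1,\ast}), \dots, p_m(\hole_{m,\ast}))$ where each $p_i$ is a strictly smaller derivation pattern, and the holes $\hole_1,\dots,\hole_k$ are distributed among the $p_i$. Splitting the families $(s'_j)_j$ and $(t_j)_j$ accordingly into subfamilies $(s'_{i,\ast})$ and $(t_{i,\ast})$, the induction hypothesis applied to each $p_i$ yields $p_i(s'_{i,\ast}) \redi[\delta] p_i(t_{i,\ast})$. Exactly as in the base case, applying \proofrule{lift_r} and then \cref{lem:redig-included-in-redi} gives $q(s'_1,\dots,s'_k) \redi[\delta] q(t_1,\dots,t_k)$.

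There is no real obstacle here: finiteness of the patterns eliminates the need for coinduction, and unlike the λ-calculus proof there is no substitution mechanism to carry through the argument, so the whole lemma reduces to iterated applications of \proofrule{lift_r}. This simplicity is the essential payoff of having isolated \enquote{finite root patterns} as the fundamental data of cut-elimination steps, and it is what makes the $\muMALLi$ case conceptually lighter than first-order rewriting or the λ-calculus once the generic machinery of \cref{the:compression} is in place.
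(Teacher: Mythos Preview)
Your proof is correct and follows exactly the approach the paper sketches: induction on the finite pattern $q$, using \proofrule{lift_r} at each node and \cref{lem:redig-included-in-redi} to pass from $\redig[\delta]$ to $\redi[\delta]$. You have simply filled in the details that the paper leaves implicit.
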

	
	\begin{proof}
	Again, the proof is similar to the one for \cref{lem:fo-pattern-fill},
	but is in fact simpler: as we only consider a finite pattern~$q$,
	we just need to proceed by induction on~$q$.
	\end{proof}

\begin{theorem}
	$\red$ satisfies the property $\property{Q}$,
	and thus has the compression property.
\end{theorem}
	
	\begin{proof}
	By induction on arbitrary cut-elimination steps $s \red t$
	(as defined in \cref{def:red}), one can show that any such step
	has the shape $p(s_1,\dots,s_k) \red q(s_1,\dots,s_k)$
	for some derivation prefixes $p$ and $q$
	(to be precise, $q$ may not use some of the $s_i$
	but this has no incidence on the proof).
	Indeed, as explained above it is the case for all
	cut-elimination root steps
	(as defined in \shortlongversion
	{the appendices of the long version of this article \cite{Cerda.Sau.26a}}
	{\cref{app:elimination}}),
	and the induction case is trivial.
	
	As a consequence, if we take an ordinal $\delta$ and 
	$s,t,t' \in \dtrees[\muMALLi]$
	such that $\property{P}_{\delta}$
	and $s \redi[\delta] t \red t'$,
	we can write $t = p(t_1,\dots,t_n)$ and $t' = q(t_1,\dots,t_n)$.
	By \cref{lem:muMALL-pattern-extract,lem:muMALL-pattern-fill}
	we obtain
	$s \reds p(s'_1,\dots,s'_k) \redi[\delta] q(t_1,\dots,t_n) = t'$.
	\end{proof}

%\begin{corollary}[compression]
%	$\muMALLi$ cut-elimination has the compression property.
%\end{corollary}

\section{Conclusion and further work}
\label{sec:conclusion}

In the present paper, we did the following:
\begin{enumerate}
\item introduce a generic framework for rewriting
	infinitary objects,
	and extend to this setting
	the equivalence between topologically and coinductively
	presented infinitary rewriting,
\item perform the first generic treatment of the compression property
	for coinductive infinitary rewriting,
	by providing a novel characterisation that can be readily applied
	to all the existing infinitary term rewriting systems
%	(noticeably including
	(\eg,
	first-order rewriting and the λ-calculus),
\item take advantage of the previous work to present
	cut-elimination in a non-wellfounded proof system
	as a coinductive rewriting procedure,
	and fully work out the example of the system $\muMALLi$
	for which we provide a simple proof of compression
	(a property whose significance in this setting
	has been recalled in the introduction).
\end{enumerate}

The former two achievements unify and complete several threads
in the literature and are meant to provide a reasonable level of generality
for future investigations into coinductive
% infinitary
rewriting.
Regarding compression, a natural follow-up would be to investigate
the effectiveness of the procedure introduced in the proof of
\cref{the:compression}:
given a computable derivation witnessing a rewriting $s \redi t$,
is it computable to compress it into $s \redo t$,
as can be conjectured from our proof?
The suitable notion of \enquote{computable}
remains to be clarified.
%From our proof one can conjecture that it is indeed the case,
%but it remains unclear even how to correctly define
%\enquote{computable} in this context.

The latter achievement is also a first step towards developing
non-wellfounded proof theory
(especially of the system $\muLLi$ for linear logic with fixed points)
in a coinductive setting, which is our longer term goal:
a result we typically aim at is a fully coinductive proof of 
cut-elimination for $\muLLi$.
Similar efforts are currently being conducted in this direction,
\eg by Sierra-Miranda \emph{et al.} \cite{Sierra-Miranda.Stu.24}
who present non-wellfounded proofs
for Gödel-Löb logic and Grzegorczyk modal logic in a coalgebraic way.
A benefit of such a coinductive treatment would be to help in 
the formalisation of the meta-theory of these systems
in proof assistants such as Rocq. 

\clearpage
\bibliography{compression.bib}

\shortlongversion{}{%
	\clearpage
	\appendix
	\newcounter{tmp}%\input{compression-appendix-proofs}
	% !TeX spellcheck = en_GB

\section{Compression for (coinductive) infinitary λ-calculi}
\label[appendix]{app:compression-lambda}

\begin{lemma}[pattern extraction] \label{lem:lam-pattern-extract}
	Consider an ordinal $\delta$ such that
	$\property{P}_{\delta}$ holds, and
	$s, u, v \in \liiiterms abc$
	such that $s \redi[\delta] (λx.u)v$.
	Then there are $u', v' \in \liiiterms abc$ such that
	$s \reds (λx.u')v'$
	with $u' \redi[\delta] u$ and $v' \redig[\delta] v$.
\end{lemma}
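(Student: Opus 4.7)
The plan is to mirror the proof of the first-order pattern extraction lemma (\cref{lem:fo-pattern-extract}), using the fact that the redex pattern $(\lambda x.u)v$ is a fixed two-layer pattern: an application whose left component is a $\lambda$-abstraction. I would therefore iterate the same \proofrule{split}/$\property{P}_\delta$/\proofrule{lift_r} extraction twice, once to expose the application structure and once to expose the $\lambda$ underneath. The asymmetry between $u' \redi[\delta] u$ and $v' \redig[\delta] v$ in the conclusion is what dictates the two different tools used for the two extracted subterms.

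Concretely, I would start by unfolding the derivation of $s \redi[\delta] (\lambda x.u)v$: by the rule \proofrule{split}, there exist $n \in \Nat$ and $t \in \liiiterms{abc}$ with $s \redsequence{\delta}{n} t$ and $t \redig[\delta] (\lambda x.u)v$. Applying $\property{P}_\delta$ to the left premise yields $s \reds s''$ and $s'' \redigs[\epsilon] t$ for some $\epsilon < \delta$, whence by \cref{lem:redig-max-transitivity,lem:redig-higher-ordinals} we obtain $s'' \redig[\delta] (\lambda x.u)v$. Since the conclusion is an application, the last rule in this derivation must be \proofrule{lift_@} (after the appropriate \proofrule{\later_0}/\proofrule{\later_1} steps peeling $\later_b$ and $\later_c$), so $s'' = s_1 s_2$ with $s_1 \redi[\delta] \lambda x.u$ and $s_2 \redi[\delta] v$.

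To produce $u'$, I would iterate the same extraction on $s_1 \redi[\delta] \lambda x.u$: \proofrule{split} followed by $\property{P}_\delta$ gives $s_1 \reds s_1''$ with $s_1'' \redig[\delta] \lambda x.u$, and since the conclusion is an abstraction the derivation must end with \proofrule{lift_\lambda} (after peeling $\later_a$). Thus $s_1'' = \lambda x.u'$ for some $u'$ with $u' \redi[\delta] u$. To produce $v'$ with the strictly stronger $v' \redig[\delta] v$, it suffices to invoke \cref{lem:P-implies-preponement-zero-steps} directly on $s_2 \redi[\delta] v$, which provides $s_2 \reds v'$ with $v' \redig[\delta] v$. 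Composing the zero-step reductions gives $s \reds s_1 s_2 \reds s_1 v' \reds (\lambda x.u') v'$, which is the required conclusion.

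There is no substantial obstacle: the reasoning is essentially the same as in the first-order case, with the bookkeeping of the $\later_a, \later_b, \later_c$ guards being the only additional chore, handled uniformly by inserting \proofrule{\later_0} or \proofrule{\later_1} depending on the values of $a,b,c$. The one subtlety worth highlighting is that one must not try to use a single uniform extraction for both components of the application: the stronger $\redig[\delta]$ conclusion demanded for $v$ forces the proof to use \cref{lem:P-implies-preponement-zero-steps} on that side, whereas the deeper extraction needed to reach $u$ under the $\lambda$ naturally delivers only the weaker $\redi[\delta]$.
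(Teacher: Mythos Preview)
Your proposal is correct and follows essentially the same approach as the paper: both unfold the derivation of $s \redi[\delta] (\lambda x.u)v$ twice (once for the application, once for the abstraction) using $\property{P}_\delta$ and \cref{lem:P-implies-preponement-zero-steps}, and both handle the $v$-component by a single appeal to \cref{lem:P-implies-preponement-zero-steps} to obtain the stronger $\redig[\delta]$. The only stylistic difference is that where the paper uniformly cites \cref{lem:P-implies-preponement-zero-steps} (its three \enquote{wavy line} steps), you sometimes reprove that lemma inline by hand-applying $\property{P}_\delta$ and \cref{lem:redig-max-transitivity}; this is harmless but redundant.
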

	
	\begin{proof}
	From the hypothesis $s \redi[\delta] (λx.u)v$
	we reconstruct the following premises:
	\[\begin{prooftree}
		\hypo{ s \reds tv'' }
		\hypo{ t \reds λx.u' }
		\hypo[ordoubleprem]{ u' \redi[\delta] u }
		\infer1[lift_λ]{ λx.u' \redig[\delta] λx.u }
		\infer[snake,ordoubleprem]2{ t \redi[\delta] λx.u }
		\hypo{ v'' \reds v' }
		\hypo{ v' \redig[\delta] v }
		\infer[snake,ordoubleprem]2{ v'' \redi[\delta] v }
		\infer2[lift_@]{ tv'' \redig[\delta] (λx.u)v }
		\infer[snake]2{ s \redi[\delta] (λx.u)v }
	\end{prooftree}\]
	where the wavy lines denote applications of 
	\cref{lem:P-implies-preponement-root-steps}
	thanks to $\property P_\delta$.
	The result follows.
	\end{proof}

\begin{lemma}[pattern filling] \label{lem:lam-pattern-fill}
	For all $u,u',v,v' \in \liiiterms abc$,
	if $u' \redi[\delta] u$ and $v' \redig[\delta] v$
	then $u'[v'/x] \redi[\delta] u[v/x]$.
\end{lemma}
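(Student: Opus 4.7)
The plan is to apply \cref{lem:P-implies-preponement-zero-steps} to the hypothesis $u' \redi[\delta] u$, extracting a finite $\reds$-prefix that can be transported through substitution, and then to close the remaining $\redig[\delta]$ step by guarded coinduction. Although the lemma's statement does not explicitly require $\property{P}_\delta$, the intended use case (an analogue of \cref{the:fo-Q} for the $abc$-calculus) provides it; we therefore assume it throughout.

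Applying \cref{lem:P-implies-preponement-zero-steps} to $u' \redi[\delta] u$ yields $u' \reds u'' \redig[\delta] u$ for some $u''$. Since substitution commutes with each single-step reduction (by a routine induction on \cref{def:beta} together with the usual substitution lemma), the finite chain lifts to $u'[v'/x] \reds u''[v'/x]$. It thus suffices to derive $u''[v'/x] \redig[\delta] u[v/x]$ and conclude via \proofrule{split}. This step proceeds by case analysis on the rule producing $u'' \redig[\delta] u$.

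If the rule is \proofrule{lift_{\var x}} so that $u'' = y = u$, then either $y \neq x$ (in which case $u''[v'/x] = u[v/x] = y$ and \proofrule{lift_{\var x}} yields $y \redig[\delta] y$), or $y = x$ (in which case $u''[v'/x] = v'$, $u[v/x] = v$, and the hypothesis $v' \redig[\delta] v$ is exactly what is needed). If the rule is \proofrule{lift_λ} so that $u = λy.w$ and $u'' = λy.w'$ with $w' \redi[\delta] w$ (α-renaming so that $y$ is fresh for $x$, $v$, $v'$), the coinductive hypothesis applied to $w'$, $w$, $v'$, $v$ yields $w'[v'/x] \redi[\delta] w[v/x]$, and \proofrule{lift_λ} lifts this to $λy.(w'[v'/x]) \redig[\delta] λy.(w[v/x])$, i.e., $u''[v'/x] \redig[\delta] u[v/x]$. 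The \proofrule{lift_@} case is entirely analogous with two premisses.

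The main technical obstacle is productivity of the coinduction in the presence of the $\later_a$, $\later_b$, $\later_c$ annotations of \cref{the:M-equiv-C-lambda}: each coinductive appeal must sit strictly below at least one \proofrule{lift_λ} or \proofrule{lift_@} guard in a direction where the calculus itself is coinductive (i.e., where the corresponding parameter among $a, b, c$ equals $1$), which is automatic here since substitution preserves the outermost constructor of $u$. A secondary concern, noted in \cref{rem:alpha}, is the rigorous treatment of capture-avoidance under $λ$; this is handled by working with α-equivalence classes in the nominal-set framework so that the substitution equalities $(λy.w)[v'/x] = λy.(w[v'/x])$ hold definitionally for suitably fresh $y$.
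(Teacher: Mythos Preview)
Your case analysis on the last rule producing $u'' \redig[\delta] u$ matches the paper's exactly, as does the handling of the variable, abstraction and application cases. The substantive divergence is your appeal to \cref{lem:P-implies-preponement-zero-steps}, which forces you to import the extra hypothesis $\property{P}_\delta$; the paper does not need it. Instead, the paper simply unfolds the \proofrule{split} rule to obtain $u' \redsequence{\delta}{n} u'' \redig[\delta] u$ as actual premisses of the input derivation, and lifts the prefix $u' \redsequence{\delta}{n} u''$ through the substitution $[v'/x]$ by a separate easy fact: $s \redig[\epsilon] t$ implies $s[v'/x] \redig[\epsilon] t[v'/x]$ (same $v'$ on both sides), together with the standard lifting of $\reds$. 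This yields the lemma exactly as stated, whereas your argument only establishes the version with $\property{P}_\delta$ as an additional assumption. Your observation that the intended use case supplies $\property{P}_\delta$ is correct, so nothing breaks downstream, but the dependency is removable and the paper removes it.

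A smaller point: the paper is explicit that the argument is by \emph{nested induction and coinduction} depending on $a,b,c$, so that in directions where the relevant bit is $0$ the recursive call is an inductive one, justified by descending to a genuine subderivation of $u' \redi[\delta] u$ across a $\later_0$ edge. Your discussion of productivity addresses only the coinductive appeals and leaves the inductive directions implicit. In your variant this is slightly more delicate: after invoking \cref{lem:P-implies-preponement-zero-steps}, the derivation of $u'' \redig[\delta] u$ is freshly constructed rather than a subderivation of the input, so the well-foundedness of the inductive calls must be read off the structure of the term~$u$ (which does shrink along a $\later_0$ edge) rather than off the input derivation.
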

	
	\begin{proof}
	We proceed by nested induction and coinduction
	(depending on the booleans $a,b,c$)
	over $u' \redi[\delta] u$.
	By the rule \proofrule{split}, there must be reductions
	$u' \redsequence \delta n u'' \redig[\delta] u$.
	\begin{itemize}
	\item If $u'' = x \redig[\delta] x = u$, then
		$u''[v'/x] = v' \redig[\delta] v = u[v/x]$.
	\item If $u'' = y \redig[\delta] y = u$, then
		$u''[v'/x] = y \redig[\delta] y = u[v/x]$.
	\item If $u'' = λy.w'' \redig[\delta] λy.w = u$
		with $w'' \redi[\delta] w$,
		then by induction (if $a = 0$) or coinduction (if $a = 1$)
		$w''[v'/x] \redig[\delta] w[v/x]$,
		hence by the rule \proofrule{lift_λ}
		we obtain $u''[v'/x] \redig[\delta] u[v/x]$.
	\item In the application case we proceed similarly
		and obtain $u''[v'/x] \redig[\delta] u[v/x]$.
	\end{itemize}
	In addition, it easy to verify that
	for all reduction $s \redig[\epsilon] t$
	there is a reduction $s[v'/x] \redig[\epsilon] t[v'/x]$.
	As a consequence, $u' \redsequence \delta n u''$
	entails that $u'[v'/x] \redsequence \delta n u''[v'/x]$.
	Thus we have built the two premises allowing
	to conclude by the rule \proofrule{split}.
	\end{proof}

\begin{theorem}
	The relation $\mathord{\red}$ on $\liiiterms abc$
	satisfies the property $\property Q$,
	and thus has the compression property.
\end{theorem}

	\begin{proof}
	Given an ordinal $\delta$ and terms $s,t,t' \in \liiiterms abc$
	such that $\property{P}_{\delta}$
	and $s \redi[\delta] t \red t'$,
	we want to build a term $s' \in \liiiterms abc$
	and reductions $s \reds s' \redi[\delta] t'$.
	The proof is a straightforward induction over $t \red t'$:
	in the base case where $t = (λx.u)v$ and $t' = u[v/x]$
	the result follows from 
	\cref{lem:lam-pattern-extract,lem:lam-pattern-fill};
	the induction cases are treated exactly as in \cref{the:fo-Q}.
	\end{proof}

\begin{remark}
	A noticeable consequence of the compression property in this setting
	is that the coinductive presentation of $\redi$
	given in \cref{the:M-equiv-C-lambda}
	is equivalent to the standard one
	\cite{Endrullis.Pol.13,Cerda.24}, namely:
	\[
		\begin{prooftree}
		\hypo{s \reds x}
		\infer1{s \redi x}
		\end{prooftree}
	\qquad
		\begin{prooftree}
		\hypo{s \reds λx.u}
		\hypo[ordoubleprem]{u \redi u' \strut}
		\infer2{s \redi λx.u'}
		\end{prooftree}
	\qquad
		\begin{prooftree}
		\hypo{s \reds uv}
		\hypo[ordoubleprem]{u \redi u' \strut}
		\hypo[ordoubleprem]{v \redi v' \strut}
		\infer3{s \redi u'v'}
		\end{prooftree}
	\]
	where the (co)inductive nature of the premises
	again depends on the booleans $a,b,c$.
	Indeed these rules are just the result of
	the rule \proofrule{split} followed by a rule
	\proofrule{lift_{\var x}}, \proofrule{lift_λ} or \proofrule{lift_@}.
\end{remark}

In fact, compression holds for a large family of
\enquote{infinitary combinatory reduction systems} (\textsc{icrs}),
\ie infinitary higher-order rewriting systems:
\begin{fact}[{\cite[Theorem 5.2]{Ketema.Sim.11}}]
\label{fact:compression-icrs}
	Every fully-extended, left-linear \textsc{icrs}
	has the compression property.
\end{fact}
%all fully-extended, left-linear \textsc{icrs}
%have the compression property \cite[Theorem 5.2]{Ketema.Sim.11}.
Even though space constraints prevent us from developing this point
and the corresponding definitions,
let us mention that \textsc{icrs} fit perfectly in the generic
framework introduced in \cref{ssec:arbitrary-derivations}
and that $(\liiiterms abc, \red)$ is a particular case
of a fully-extended, left-linear \textsc{icrs}.
As a consequence, compression in this setting can be proved
using the characterisation from \cref{the:compression}.
%; the proof is again essentially the same as for \cref{the:fo-Q}.

	% !TeX spellcheck = en_GB

\section{The multicut rule}
\label[appendix]{app:multicut}

\begin{definition} \label{def:multicut}
	The \defemph{multicut rule} parametrised by
	$k \in \Nat$, $\vec n \coloneqq (n_1, \dots, n_k) \in \Nat^k$
	and a relation $\mathord{\cutrel} \subset (\Nat^2)^2$
	is defined by:
	\[\begin{myprooftree}
		\hypo{ \vdash F_{1,1}, \dots, F_{1,n_1} }
		\hypo{ \dots }
		\hypo{ \vdash F_{k,1}, \dots, F_{k,n_k} }
		\infer[double]3[mcut_{\mathnormal{k,\vec n,\mathord{\cutrel}}}]
			{ \vdash \Gamma }
	\end{myprooftree}\]
	provided $\cutrel$ satisfies the following conditions:
	\begin{description}
	\item[Correctness.] The support of $\cutrel$, 
		\ie the indices $(i,j)$ such that there is at least one relation
		$(i,j) \cutrel (i',j')$ or $(i',j') \cutrel (i,j)$,
		are all such that $i \in [1,k]$ and $j \in [1,n_i]$,
	\item[Duality.] For all $\indi$ in the support of $\cutrel$
		there is a unique $\indj$ such that $\indi \cutrel \indj$,
		which is such that symmetrically $\indj \cutrel \indi$, 
		and such that $F_\indj = \llneg{F_\indi}$,
	\item[Connectedness and acyclicity.]
		If considered as a graph with vertices $[1,k]$,
		the first projection of $\cutrel$ is connected and acyclic;
		in other terms, \begin{ienumerate}
		\item for all $i, i' \in [1,k]$
			there exist $\indi_1 \cutrel \dots \cutrel \indi_m$
			and $j, j'$ such that $\indi_1 = (i,j)$ and $\indi_m = (i',j')$,
		\item the only cycles
			$(i,j) = \indi_1 \cutrel \dots \cutrel \indi_m = (i,j')$
			are due to the symmetry of the relation,
			\ie there is $p \in [1,m-2]$ such that $\indi_{p+2} = \indi_p$,
		\end{ienumerate}
	\end{description}
	and where $\Gamma$ contains all the formulæ $F_\indi$
	such that the pair $\indi$ is not in the support of $\cutrel$,
	listed in the lexicographic order over these pairs of indices.
\end{definition}

	% !TeX spellcheck = en_GB

\section{Root steps for \texorpdfstring{$\muMALLi$}{μMALL∞} cut-elimination}
\label[appendix]{app:elimination}

As usual we distinguish the \defemph{principal} cut-elimination steps
corresponding to the situation where dual derivation rules
are cut against each other
(this is where a cut is actually \enquote{eliminated}),
and the \defemph{commutative} cut-elimination steps
where a (multi)cut is permuted with a derivation rule
appearing just above it
(here cuts are only \enquote{moved upwards}).
In addition, we start with two rules use to perform some derivation
transformations in the presence of multicuts:
the first one describes how a \proofrule{mcut}
can absorb a \proofrule{cut} when it meets one,
the second one permutes the premisses of a multicut.

We use the following notations:
in the premiss sequents playing an active role of the cut-elimination steps,
the principal formulæ are denoted by $F$, $G$
and the lists of remaining formulæ by $\Gamma$, $\Delta$, $\Epsilon$
(as in \cref{fig:muMALLi});
the other premiss sequents are denoted by $\moresequents$,
which stands for a list $\Zeta_1, \dots, \Zeta_k$;
the conclusion sequents are denoted by $\conclusion$.

Notice that what we describe are in fact sets of pairs of derivations,
\eg the \cutelimrule{cut}{mcut} root steps described
by the first yellow square below are all the pairs
\[\left(
	\drule{mcut_{\mathnormal{k+1,\vec n,\cutrel}}}
		(s_1,\dots,s_k,\drule{cut}(t,t')),
	\drule{mcut_{\mathnormal{k+2,\vec n',\cutrel'}}}
		(s_1,\dots,s_k,t,t')
\right)\]
satisfying the given conditions,
for $s_1,\dots,s_k,t,t' \in \dtrees[\muMALLi]$.

\subsection{Steps handling multicuts}
\label[appendix]{app:elimination:multicut-steps}

The merge \cutelimrule{cut}{mcut} root steps are 
all the pairs of the following shape:
\cutelimstep{%
	\begin{prooftree}
	\hypo{\moresequents}
	\hypo{\vdash \Gamma,F}
	\hypo{\vdash \Delta,\llneg F}
	\infer[double]2[cut]{\vdash \Gamma,\Delta}     
	\infer[double]2[mcut_{\mathnormal{k+1,\vec n,\cutrel}}]
		{\vdash \conclusion}
	\end{prooftree}
\widered
	\begin{prooftree}
	\hypo{\moresequents}
	\hypo{\vdash \Gamma,F}
	\hypo{\vdash \Delta,\llneg F}
	\infer[double]3[mcut_{\mathnormal{k+2,\vec n',\cutrel'}}]{
	\vdash \conclusion}
	\end{prooftree}
}
such that:
\begin{itemize}
\item $\vec n \coloneqq (n_1, \dots, n_k, 
	\seqcard{\Gamma}+\seqcard{\Delta})$ and
	$\vec n' \coloneqq (n_1, \dots, n_k, \seqcard\Gamma+1, 
	\seqcard\Delta+1)$,
\item on the right-hand side
    $(k+1,\seqcard{\Gamma}+1) \cutrel' (k+2,\seqcard{\Delta}+1)$ and
	$\indi \cutrel' \indj$ if $\prempred{\indi} \cutrel \prempred{\indj}$,
	with:
	\begin{align*}
		\prempred{i,j} & \coloneqq (i,j) & \text{for $i \leq k$} \\
		\prempred{k+1,j} & \coloneqq (k+1,j) 
			& \text{for $j \leq \seqcard{\Gamma}$} \\
		\prempred{k+2,j} & \coloneqq (k+1,\seqcard{\Gamma} + j) 
			& \text{for $j \leq \seqcard{\Delta}$} \\
		\prempred{i,j} & \coloneqq \text{undefined} & \text{otherwise}
	\end{align*}
	\ie $\cutrel'$ contains
	\begin{ienumerate}
    \item cuts between $F$ and $\llneg F$,
	\item the cuts from $\cutrel$ inside $\moresequents$ or 
		between $\moresequents$ and $\Gamma$ or $\Delta$.
	\end{ienumerate}
\end{itemize}

\medskip

\noindent
The premiss permutation root steps
parametrised by a permutation $\tau : [1,k] \to [1,k]$
are all the pairs of the following shape:
\cutelimstep{%
	\begin{prooftree}
	\hypo{\vdash \Zeta_1}
	\hypo{\dots}
	\hypo{\vdash \Zeta_k}
	\infer[double]3[mcut_{\mathnormal{k,\vec n,\cutrel}}]
		{ \vdash \conclusion_1,\dots,\conclusion_k }
	\end{prooftree}
\widered
	\begin{prooftree}
	\hypo{\vdash \Zeta_{\tau(1)}}
	\hypo{\dots}
	\hypo{\vdash \Zeta_{\tau(k)}}
	\infer[double]3[mcut_{\mathnormal{k,\vec n',\cutrel'}}]
		{ \vdash \conclusion_{\tau(1)},\dots,\conclusion_{\tau(k)} }
	\infer[double]1[x_{\mathnormal \sigma}]
		{ \vdash \conclusion_1,\dots,\conclusion_k}
	\end{prooftree}
}
such that:
\begin{itemize}
\item $\vec n \coloneqq (n_1,\dots,n_k)$ and 
	$\vec n' \coloneqq (n_{\tau(1), \dots, n_{\tau(k)}})$,
\item for all $(i,j)$ and $(i',j')$ in the support of $\cutrel$,
	$(\tau(i),j) \cutrel' (\tau(i'),j')$ \ifff $(i,j) \cutrel (i',j')$,
\item $\sigma$ is the permutation obtained by permuting the lists
	$\conclusion_i$ according to $\tau$ without modifying the
	ordering inside each of these lists.
\end{itemize}

\subsection{Principal reduction steps}

The \cutelimrule{ax}{mcut} root steps are 
all the pairs of the following shape:
\cutelimstep{%
	\begin{prooftree}
	\hypo{\moresequents}
	\infer[double]0[ax_{\mathnormal F}]{\vdash F, \llneg F}
	\hypo{\vdash \Gamma, F}    
	\infer[double]3[mcut_{\mathnormal{k+2, \vec n, \cutrel}}]
		{\vdash \conclusion}
	\end{prooftree}
\widered
	\begin{prooftree}
	\hypo{\moresequents}
	\hypo{\vdash \Gamma, F}
	\infer[double]2[mcut_{\mathnormal{k+1, \vec n', \cutrel'}}]
		{\vdash \conclusion}
	\end{prooftree}
}
such that:
\begin{itemize}
\item $\vec n \coloneqq (n_1, \dots, n_k, 2, \seqcard{\Gamma}+1)$ and
	$\vec n' \coloneqq (n_1,\dots,n_k,\seqcard{\Gamma}+1)$,
\item on the left-hand side $(k+1,2) \cutrel (k+2,\seqcard{\Gamma}+1)$,
	\ie $\llneg F$ is cut against the last $F$,
\item on the right-hand side
	$\indi \cutrel' \indj$ \ifff $\prempred{\indi} \cutrel \prempred{\indj}$,
	with:
	\begin{align*}
		\prempred{i,j} & \coloneqq (i,j) & \text{for $i \leq k$} \\
		\prempred{k+1,j} & \coloneqq (k+2,j) 
			& \text{for $j \leq \seqcard{\Gamma}$} \\
		\prempred{k+1,\seqcard{\Gamma}+1} & \coloneqq (k+1,1),
	\end{align*}
	\ie $\cutrel'$ contains
	\begin{ienumerate}
	\item the cuts from $\cutrel$ inside $\moresequents$ or 
		between $\moresequents$ and $\Gamma$,
	\item if there was a cut between a formula in $\moresequents$
		and the first $F$, then a cut between this formula and
		the remaining~$F$.
	\end{ienumerate}
	Thanks to the assumptions on $\cutrel$
	no cut is lost during the process,
	and $\cutrel'$ satisfies the same required assumptions.
\end{itemize}

\medskip

\noindent
The \cutelimrule{\lltens}{\llpar} root steps are 
all the pairs of the following shape:
\cutelimstep{%
	\begin{prooftree}
	\hypo{\moresequents}
	\hypo{\vdash \Gamma,F}
	\hypo{\vdash \Delta,G}
	\infer[double]2[\lltens]{\vdash \Gamma,\Delta,F\lltens G}
	\hypo{\vdash \Epsilon,\llneg F,\llneg G}
	\infer[double]1[\llpar]{\vdash \Epsilon,\llneg F\llpar \llneg G} 
	\infer[double]3[mcut_{\mathnormal{k+2, \vec n, \cutrel}}]
		{\vdash \conclusion}
	\end{prooftree}
\widered
	\begin{prooftree}
	\hypo{\moresequents}
	\hypo{\vdash \Gamma,F}
	\hypo{\vdash \Delta,G}
	\hypo{\vdash \Epsilon,\llneg F,\llneg G}
	\infer[double]4[mcut_{\mathnormal{k+3, \vec n', \cutrel'}}]
		{\vdash \conclusion}
	\end{prooftree}
}
such that:
\begin{itemize}
\item $\vec n \coloneqq (n_1, \dots, n_k,
	\seqcard{\Gamma}+\seqcard{\Delta}+1, \seqcard{\Epsilon}+1)$ and
	$\vec n' \coloneqq (n_1, \dots, n_k, \seqcard\Gamma+1, \seqcard\Delta+1, 
	\seqcard{\Epsilon}+1)$,
\item on the left-hand side $(k+1,n_{k+1}) \cutrel (k+2,n_{k+2})$,
	\ie $F \lltens G$ is cut against $\llneg F \llpar \llneg G$,
\item on the right-hand side
    $(k+1,\seqcard{\Gamma}+1) \cutrel' (k+3,\seqcard{\Epsilon}+1)$,
    $(k+1,\seqcard{\Delta}+1) \cutrel' (k+3,\seqcard{\Epsilon}+2)$, and
	$\indi \cutrel' \indj$ if $\prempred{\indi} \cutrel \prempred{\indj}$,
	with:
	\begin{align*}
		\prempred{i,j} & \coloneqq (i,j) & \text{for $i \leq k$} \\
		\prempred{k+1,j} & \coloneqq (k+1,j) 
			& \text{for $j \leq \seqcard{\Gamma}$} \\
		\prempred{k+2,j} & \coloneqq (k+1,\seqcard{\Gamma} + j) 
			& \text{for $j \leq \seqcard{\Delta}$} \\
		\prempred{k+3,j} & \coloneqq (k+2,j) 
			& \text{for $j \leq \seqcard{\Epsilon}$} \\
		\prempred{i,j} & \coloneqq \text{undefined} & \text{otherwise}
	\end{align*}
	\ie $\cutrel'$ contains
	\begin{ienumerate}
    \item cuts between $F$ and $\llneg F$ and between $G$ and $\llneg G$,
	\item the cuts from $\cutrel$ inside $\moresequents$ or 
		between $\moresequents$ and $\Gamma$, $\Delta$ or $\Epsilon$.
	\end{ienumerate}
\end{itemize}

\medskip

\noindent
The \cutelimrule{\llwith}{\llplus} root steps are 
all the pairs of the following shape:
\cutelimstep{%
	\begin{prooftree}
	\hypo{\moresequents}
	\hypo{\vdash \Gamma, F_0}
	\hypo{\vdash \Gamma,F_1}
	\infer[double]2[\llwith]{\vdash \Gamma,F_0\llwith F_1}
	\hypo{\vdash \Delta,F_i^\bot}
	\infer[double]1[\llplus_{\mathnormal{i, F_{i-1}}}]
		{\vdash \Delta, \llneg{F_0} \llplus \llneg{F_1} } 
	\infer[double]3[mcut_{\mathnormal{k+2, \vec n, \cutrel}}]
		{\vdash \conclusion}
	\end{prooftree}
\widered
	\begin{prooftree}
	\hypo{\moresequents}
	\hypo{\vdash \Gamma,F_i}
	\hypo{\vdash \Delta,\llneg{F_i}}
	\infer[double]3[mcut_{\mathnormal{k+2, \vec n, \cutrel}}]
		{\vdash \conclusion}
	\end{prooftree}
}
such that, thanks to the additive behaviour, everything goes nicely:
\begin{itemize}
\item $\vec n \coloneqq (n_1, \dots, n_k, \seqcard\Gamma+1, 
	\seqcard\Delta+1)$,
\item $(k+1,\seqcard\Gamma+1) \cutrel (k+2,\seqcard\Delta+1)$,
	\ie on the left-hand side $F_0\llwith F_1$ is cut against $\llneg{F_0} 
	\llplus \llneg{F_1}$ and
	on the right-hand side $F_i$ is cut against $\llneg{F_i}$.
\end{itemize}

\medskip

\noindent
The \cutelimrule{\mu}{\nu} root steps are 
all the pairs of the following shape:
\cutelimstep{%
	\begin{prooftree}
	\hypo{\moresequents}
	\hypo{\vdash \Gamma, F[\mu X.F/X]}
	\infer[double]1[\mu]{\vdash \Gamma,\mu X.F}
	\hypo{\vdash \Delta, \llneg F[\nu X.\llneg F/X]}
	\infer[double]1[\nu]{\vdash  \Delta,\nu X.\llneg F} 
	\infer[double]3[mcut_{\mathnormal{k+2, \vec n, \cutrel}}]
		{\vdash \conclusion}
	\end{prooftree}
\widered
	\begin{prooftree}
	\hypo{\moresequents}
	\hypo{\vdash \Gamma, F[\mu X.F/X]}
	\hypo{\vdash \Delta, \llneg F[\nu X.\llneg F/X]}
	\infer[double]3[mcut_{\mathnormal{k+2, \vec n, \cutrel}}]
		{\vdash \conclusion}
	\end{prooftree}
}
such that:
\begin{itemize}
\item $\vec n \coloneqq (n_1, \dots, n_k, \seqcard\Gamma+1, 
	\seqcard\Delta+1)$,
\item $(k+1,\seqcard\Gamma+1) \cutrel (k+2,\seqcard\Delta+1)$,
	\ie on the left-hand side $\mu X.F$ is cut against $\nu X.\llneg F$ and
	on the right-hand side $F[\mu X.F/X]$ is cut against 
	$F[\nu X.\llneg F/X]$.
\end{itemize}

\medskip

\noindent
The \cutelimrule{\bot}{1} root steps are 
all the pairs of the following shape:
\cutelimstep{%
	\begin{prooftree}
	\hypo{\moresequents}
	\hypo{\vdash \Gamma}
	\infer[double]1[\bot]{\vdash \Gamma,\bot}
	\infer[double]0[1]{\vdash 1}
	\infer[double]3[mcut_{\mathnormal{k+2, \vec n, \cutrel}}]
		{\vdash \conclusion}
	\end{prooftree}
\widered
	\begin{prooftree}
	\hypo{\moresequents}
	\hypo{\vdash \Gamma}
	\infer[double]2[mcut_{\mathnormal{k+1, \vec n', \cutrel'}}]
		{\vdash \conclusion}
	\end{prooftree}
}
such that:
\begin{itemize}
\item on the left-hand side
	$\vec n \coloneqq (n_1, \dots, n_k, \seqcard{\Gamma}+1, 1)$
	and $(k+1,\seqcard{\Gamma}+1) \cutrel (k+2,1)$,
	\ie $\bot$ is cut against $1$,
\item on the right-hand side
	$\vec n' \coloneqq (n_1,\dots,n_k,\seqcard{\Gamma})$ and 
	$\indi \cutrel' \indj$ \ifff $\prempred{\indi} \cutrel \prempred{\indj}$,
	with:
	\begin{equation*}
		\prempred{i,j} \coloneqq (i,j) \text{ for $i \leq k$} \qquad
		\prempred{k+1,j} \coloneqq (k+1,j) 
	\end{equation*}
	\ie $\cutrel'$ contains the cuts from $\cutrel$
		inside $\moresequents$ or 
		between $\moresequents$ and $\Gamma$.
\end{itemize}

\subsection{Commutative reduction steps}

\noindent
The \cutelimrule{\llpar}{mcut} root steps are 
all the pairs of the following shape:
\cutelimstep{%
	\begin{prooftree}
	\hypo{\moresequents}
	\hypo{\vdash \Gamma,F,G}
	\infer[double]1[\llpar]{\vdash \Gamma, F\llpar G} 
	\infer[double]2[mcut_{\mathnormal{k+1,\vec n,\cutrel}}]
		{\vdash \conclusion, F\llpar G}
	\end{prooftree}
\widered
	\begin{prooftree}
	\hypo{\moresequents}
	\hypo{\vdash \Gamma,F,G}
	\infer[double]2[mcut_{\mathnormal{k+1,\vec n',\cutrel}} 
	]{\vdash \conclusion, F,G}
	\infer[double]1[\llpar]{\vdash \conclusion, F\llpar G}
	\end{prooftree}
}
such that
$\vec n \coloneqq (n_1, \dots, n_k, \seqcard{\Gamma}+1)$,
$\vec n' \coloneqq (n_1,\dots,n_k,\seqcard{\Gamma}+2)$,
and $(k+1,\seqcard{\Gamma}+1)$ is not in the support of $\cutrel$.

\medskip

\noindent
The \cutelimrule{\lltens}{mcut} root steps are 
all the pairs of the following shape:
\cutelimstep{%
	\begin{prooftree}
	\hypo{\moresequents_\Gamma}
	\hypo{\moresequents_\Delta}
	\hypo{\vdash \Gamma,F}
	\hypo{\vdash \Delta,G}
	\infer[double]2[\lltens]{\vdash \Gamma,\Delta, F\lltens G} 
	\infer[double]3[mcut_{\mathnormal{k+l+1,\vec n,\cutrel}}]
		{\vdash \conclusion_{\moresequents_\Gamma},
		\conclusion_{\moresequents_\Delta},
		\conclusion_\Gamma, \conclusion_\Delta, F\lltens G}
	\end{prooftree}
\widered
	\begin{prooftree}
	\hypo{\moresequents_\Gamma}
	\hypo{\vdash \Gamma,F}
	\infer[double]2[mcut_{\mathnormal{k+1,\vec n',\cutrel'}}]
		{\vdash \conclusion_{\moresequents_\Gamma}, \conclusion_\Gamma, F}
	\hypo{\moresequents_\Delta}
	\hypo{\vdash \Delta,G}
	\infer[double]2[mcut_{\mathnormal{l+1,\vec n'',\cutrel''}}]
		{\vdash \conclusion_{\moresequents_\Delta}, \conclusion_\Delta, G}
	\infer[double]2[\lltens]
		{\vdash \conclusion_{\moresequents_\Gamma}, \conclusion_\Gamma, 
		\conclusion_{\moresequents_\Delta}, \conclusion_\Delta, F\lltens G}
	\infer[double]1[x_{\mathnormal{\sigma}}]
		{ \vdash \conclusion_{\moresequents_\Gamma},
		\conclusion_{\moresequents_\Delta},
		\conclusion_\Gamma, \conclusion_\Delta, F\lltens G }
	\end{prooftree}
}
such that
(beware, this is the really difficult one!):
\begin{itemize}
\item $\vec n \coloneqq (n'_1, \dots, n'_k, n''_1, \dots, n''_l,
	\seqcard{\Gamma} + \seqcard{\Delta} + 1)$ is turned into
	$\vec n' \coloneqq (n'_1, \dots, n'_k, \seqcard{\Gamma} + 1)$ and
	$\vec n'' \coloneqq (n''_1, \dots, n''_l, \seqcard{\Delta} + 1)$,
\item on the left-hand side:
	\begin{itemize}
	\item $\cutrel$ relates only (indices of) formulæ
		inside $\moresequents_\Gamma$,
		inside $\moresequents_\Delta$,
		between $\moresequents_\Gamma$ and $\Gamma$, or
		between $\moresequents_\Delta$ and $\Delta$:
		such a partition of sequents is possible thanks to
		the acyclicity and connectedness assumption
		from \cref{def:multicut},
		and the corresponding reordering of the premisses of the multicut
		can be performed thanks to the rewrite rule
		introduced in \cref{app:elimination:multicut-steps},
	\item $\conclusion_{\moresequents_\Delta}$
		(resp. $\conclusion_{\moresequents_\Delta}$, $\conclusion_\Gamma$,
		$\conclusion_\Delta$) contains the formulæ from 
		$\moresequents_\Delta$
		(resp. $\moresequents_\Delta$, $\Gamma$, $\Delta$)
		whose indices are not in the support of $\cutrel$,
		and $(k+l+1,\seqcard{\Gamma} + \seqcard{\Delta} + 1)$
		is not in the support of $\cutrel$,
	\end{itemize}
\item on the right-hand side,
    $\indi \cutrel' \indj$ if $\prempred{\indi} \cutrel \prempred{\indj}$,
	with:
	\begin{align*}
		\prempred{i,j} & \coloneqq (i,j) & \text{for $i \leq k$} \\
		\prempred{k+1,j} & \coloneqq (k+l+1,j) 
			& \text{for $j \leq \seqcard{\Gamma}$} \\
		\prempred{i,j} & \coloneqq \text{undefined} & \text{otherwise}
	\intertext{and $\indi \cutrel'' \indj$ if 
	$\prempred{\indi} \cutrel \prempred{\indj}$, with:}
		\prempred{i,j} & \coloneqq (k+i,j) & \text{for $i \leq l$} \\
		\prempred{l+1,j} & \coloneqq (k+l+1, \seqcard{\Gamma} + j) 
			& \text{for $j \leq \seqcard{\Delta}$} \\
		\prempred{i,j} & \coloneqq \text{undefined} & \text{otherwise}
	\end{align*}
	\ie $\cutrel'$ and $\cutrel''$ contain the cuts from $\cutrel$
	inside $\moresequents_\Gamma$ or inside $\moresequents_\Delta$, 
	and between $\moresequents_\Gamma$ and $\Gamma$
	or $\moresequents_\Delta$ and $\Delta$;
	as for the permutation $\sigma$, it is obtained by permuting the lists
	$\conclusion_\Gamma$ and $\conclusion_{\moresequents_\Delta}$ without 
	modifying the ordering inside each of these lists
	(and leaving $\conclusion_{\moresequents_\Gamma}$ and
	$\conclusion_\Delta$ untouched).
\end{itemize}

\medskip

\noindent
The \cutelimrule{1}{mcut} root steps are 
all the pairs of the following shape:
\cutelimstep{%
	\begin{prooftree}
	\infer[double]0[1]{\vdash 1}
	\infer[double]1[mcut_{\mathnormal{1,(1),\emptyset}}]{\vdash 1}
	\end{prooftree}
\widered
	\begin{prooftree}
	\infer[double]0[1]{\vdash 1}
	\end{prooftree}
}

\medskip

\noindent
The \cutelimrule{\bot}{mcut} root steps are 
all the pairs of the following shape:
\cutelimstep{%
	\begin{prooftree}
	\hypo{\moresequents}
	\hypo{\vdash \Gamma}
	\infer[double]1[\bot]{\vdash \Gamma, \bot} 
	\infer[double]2[mcut_{\mathnormal{k+1,\vec n,\cutrel}}]
		{\vdash \conclusion, \bot}
	\end{prooftree}
\widered
	\begin{prooftree}
	\hypo{\moresequents}
	\hypo{\vdash \Gamma}
	\infer[double]2[mcut_{\mathnormal{k+1,\vec n',\cutrel}}]
		{\vdash \conclusion}
	\infer[double]1[\bot]{\vdash \conclusion, \bot}
	\end{prooftree}
}
such that
$\vec n \coloneqq (n_1, \dots, n_k, \seqcard{\Gamma}+1)$,
$\vec n' \coloneqq (n_1,\dots,n_k,\seqcard{\Gamma})$,
and $(k+1,\seqcard{\Gamma}+1)$ is not in the support of $\cutrel$.

\medskip

\noindent
Again almost nothing happens for additive constructors and fixed points:
the \cutelimrule{\llplus}{mcut}, \cutelimrule{\llwith}{mcut},
\cutelimrule{\sigma}{mcut} for $\sigma \in \{\mu,\nu\}$,
and \cutelimrule{\top}{mcut}
root steps are, respectively, all the pairs of the following shape:
\cutelimstep{%
	\begin{prooftree}
	\hypo{\moresequents}
	\hypo{\vdash \Gamma,F_i}
	\infer[double]1[\llplus_{\mathnormal{i,F_{i-1}}}]
		{\vdash \Gamma, F_0\llplus F_1} 
	\infer[double]2[mcut_{\mathnormal{k+1,\vec n,\cutrel}}]
		{\vdash \conclusion, F_0\llplus F_1}
	\end{prooftree}
\widered
	\begin{prooftree}
	\hypo{\moresequents}
	\hypo{\vdash \Gamma,F_i}
	\infer[double]2[mcut_{\mathnormal{k+1,\vec n,\cutrel}} 
	]{\vdash \conclusion, F_i}
	\infer[double]1[\llplus_{\mathnormal{i,F_{i-1}}}]
		{\vdash \conclusion, F_0\llplus F_1}
	\end{prooftree}
}

\cutelimstep{%
	\begin{prooftree}
	\hypo{\moresequents}
	\hypo{\vdash \Gamma,F}
	\hypo{\vdash \Gamma,G}
	\infer[double]2[\llwith]{\vdash \Gamma, F\llwith G} 
	\infer[double]2[mcut_{\mathnormal{k+1,\vec n,\cutrel}}]
		{\vdash \conclusion, F\llwith G}
	\end{prooftree}
\widered
	\begin{prooftree}
	\hypo{\moresequents}
	\hypo{\vdash \Gamma,F}
	\infer[double]2[mcut_{\mathnormal{k+1,\vec n',\cutrel}}]
		{\vdash \conclusion, F}
	\hypo{\moresequents}
	\hypo{\vdash \Gamma,G}
	\infer[double]2[mcut_{\mathnormal{k+1,\vec n,\cutrel}}]
		{\vdash \conclusion, G}
	\infer[double]2[\llwith]{\vdash \conclusion, F\llwith G}
	\end{prooftree}
}

\cutelimstep{%
	\begin{prooftree}
	\hypo{\moresequents}
	\hypo{\vdash \Gamma, F[\sigma X. F/X]}
	\infer[double]1[\sigma]{\vdash \Gamma, \sigma X. F} 
	\infer[double]2[mcut_{\mathnormal{k+1,\vec n,\cutrel}}]
		{\vdash \conclusion, \sigma X.F}
	\end{prooftree}
\widered
	\begin{prooftree}
	\hypo{\moresequents}
	\hypo{\vdash \Gamma,F[\sigma X. F/X]}
	\infer[double]2[mcut_{\mathnormal{k+1,\vec n,\cutrel}}]
		{\vdash \conclusion, F[\sigma X. F/X]}
	\infer[double]1[\sigma]{\vdash \conclusion, \sigma X. F}
	\end{prooftree}
}

\cutelimstep{%
	\begin{prooftree}
	\hypo{\moresequents}
	\infer[double]0[\top]{\vdash \Gamma, \top_{\!\Gamma}} 
	\infer[double]2[mcut_{\mathnormal{k+1,\vec n,\cutrel}}]
		{\vdash \conclusion, \top}
	\end{prooftree}
\widered
	\begin{prooftree}
	\infer[double]0[\top_{\!\conclusion}]{\vdash \conclusion, \top}
	\end{prooftree}
}
such that
$\vec n \coloneqq (n_1, \dots, n_k, \seqcard{\Gamma}+1)$
and $(k+1,\seqcard{\Gamma}+1)$ is not in the support of $\cutrel$.

\medskip

\noindent
The exchange \cutelimrule{x}{mcut} root steps are 
all the pairs of the following shape:
\cutelimstep{%
	\begin{prooftree}
	\hypo{\moresequents}
	\hypo{ \vdash F_{\sigma(1)}, \dots, F_{\sigma(n)} }
	\infer[double]1[x_{\mathnormal \sigma}]{\vdash F_1,\dots,F_n}
	\infer[double]2[mcut_{\mathnormal{k+1,\vec n,\cutrel}}]
		{\vdash \conclusion, F_{i_1}, \dots, F_{i_m}}
	\end{prooftree}
\widered
	\begin{prooftree}
	\hypo{\moresequents}
	\hypo{ \vdash F_{\sigma(1)}, \dots, F_{\sigma(n)} }
	\infer[double]2[mcut_{\mathnormal{k+1,\vec n,\cutrel'}}]
		{ \vdash \conclusion, F_{\sigma(i_1)}, \dots, F_{\sigma(i_n)} }
	\infer[double]1[x_{\sigma'}]{\vdash \conclusion, F_{i_1}, \dots, F_{i_m}}
	\end{prooftree}
}
such that:
\begin{itemize}
\item $\vec n \coloneqq (n_1, \dots, n_k, n)$ and
	$1 \leq i_1 < \dots < i_m \leq n$,
\item the relation $\cutrel'$ is defined by
	$\indi \cutrel' \indj$ if $\prempred{\indi} \cutrel \prempred{\indj}$,
	with:
	\[	\prempred{i,j} \coloneqq (i,j) \text{ for $i \leq k$} \qquad
		\prempred{k+1,j} \coloneqq (k+1,\sigma(j)), \]
\item the permutation $\sigma'$ acts on the set
	$[1, \seqcard{\conclusion} + m]$ and is defined by:
	\[	\sigma'(j) \coloneqq j
			\text{ for $j \leq \seqcard{\conclusion}$} \qquad
		\sigma'(\seqcard{\conclusion} + j) \coloneqq 
			\text{the $j'$ such that $\sigma(i_j) = i_{j'}$}. \]
\end{itemize}

}

\end{document}